\crefname{subsection}{Subsection}{Subsections}
\tikzset{
    >=stealth',
    punkt/.style={
           rectangle,
           rounded corners,
           draw=black, very thick,
           text width=6.5em,
           minimum height=2em,
           text centered},
    pil/.style={
           <->,
           thick,
           shorten <=2pt,
           shorten >=2pt,}
}
\def \R{\mathbb{R}}
\def\d{\mathrm{d}}
\definecolor{my-blue}{RGB}{0,85,150}
\definecolor{tub-red}{RGB}{187,14,31}
\makeatletter \@addtoreset{equation}{section}
\newtheorem{theorem}{Theorem}[section]
\newtheorem*{assumption*}{\assumptionName}
    \providecommand{\assumptionName}{}
\newtheorem*{condition*}{\conditionName}
    \providecommand{\conditionName}{}
\newtheorem{lemma}[theorem]{Lemma}
\newtheorem{proposition}[theorem]{Proposition}
\newtheorem{definition}[theorem]{Definition}
\theoremstyle{definition}
\numberwithin{equation}{section}
\numberwithin{figure}{section}
\numberwithin{table}{section}
\newcommand{\tnorm}[1]{\left\vert\kern-0.25ex\left\vert\kern-0.25ex\left\vert #1 \right\vert\kern-0.25ex\right\vert\kern-0.25ex\right\vert}
\newcommand{\tnormbig}[1]{\bigl\vert\kern-0.25ex\bigl\vert\kern-0.25ex\bigl\vert #1 \bigr\vert\kern-0.25ex\bigr\vert\kern-0.25ex\bigr\vert}
\newcommand{\ud}{\ensuremath{\mathrm{d}}}
\newcommand{\dx}{\ud x}
\begin{document}

\title[Improved model free bounds for multi-asset options]{Improved model-free bounds for multi-asset options using \\option-implied information and deep learning}

\author[E. Dragazi]{Evangelia Dragazi}
\author[S. Liu]{Shuaiqiang Liu}
\author[A. Papapantoleon]{Antonis Papapantoleon}

\address{Department of Mathematics, SAMPS, NTUA, 15780 Athens, Greece}
\email{kdragazi@mail.ntua.gr}

\address{Delft Institute of Applied Mathematics, EEMCS, TU Delft, 2628 Delft, The Netherlands \& ING, Amsterdam, The Netherlands.}
\email{S.Liu-4@tudelft.nl}

\address{Delft Institute of Applied Mathematics, EEMCS, TU Delft, 2628 Delft, The Netherlands \& Department of Mathematics, SAMPS, National Technical University of Athens, 15780 Zografou, Greece \& Institute of Applied and Computational Mathematics, FORTH, 70013 Heraklion, Greece}
\email{a.papapantoleon@tudelft.nl}

\thanks{ED and AP gratefully acknowledge the financial support from the Hellenic Foundation for Research and Innovation Grant No. HFRI-FM17-2152.
    The views expressed in this paper are personal views of the author (SL) and do not necessarily reflect the views or policies of his current or past employers.}

\keywords{Model-free bounds, multi-asset options, dependence uncertainty, additional information, option-implied information, superhedging duality, penalization, deep learning.}  

\subjclass[2020]{91G20, 91G60, 68T07}

\date{}

\begin{abstract}
We consider the computation of model-free bounds for multi-asset options in a setting that combines dependence uncertainty with additional information on the dependence structure.
More specifically, we consider the setting where the marginal distributions are known and partial information, in the form of known prices for multi-asset options, is also available in the market.
We provide a fundamental theorem of asset pricing in this setting, as well as a superhedging duality that allows to transform the maximization problem over probability measures in a more tractable minimization problem over trading strategies.
The latter is solved using a penalization approach combined with a deep learning approximation using artificial neural networks.
The numerical method is fast and the computational time scales linearly with respect to the number of traded assets.
We finally examine the significance of various pieces of additional information. 
Empirical evidence suggests that ``relevant'' information, \textit{i.e.} prices of derivatives with the same payoff structure as the target payoff, are more useful that other information, and should be prioritized in view of the trade-off between accuracy and computational efficiency.
\end{abstract}

\maketitle



\section{Introduction}

The field of mathematical finance has been going through a change of paradigm in the last decade.
In the ``old'' paradigm, a specific model was selected and was considered as an accurate representation of reality in financial markets. 
Then, the task of financial mathematicians was to compute various quantities of interest, such as option prices, hedging strategies, risk measures, and so forth, in the selected model.
However, academics, practitioners and regulators have understood that no model, irrespective of its complexity, can offer an accurate enough representation of reality.
Therefore, in the ``new'' paradigm, there is no specific model selected, while the tasks for a financial mathematician remain to compute various quantities of interest, such as option prices, hedging strategies, risk measures, and so forth, in this new setting.
This description should be taken with a bit of salt and pepper, of course, since the ``old'' paradigm will still be used in the future, while the ``new'' paradigm dates back already to \citet{bertsimas2002on} and \citet{hobson1998robust}.

There are various ways in which to understand and implement the statement that there is ``no specific model selected''.
On the one hand, we can start from a specific model and take into account uncertainty in various degrees; for example, parameter uncertainty within the Black--Scholes or another model specification, model uncertainty where \textit{e.g.} all diffusion models with unspecified drift and diffusion coefficients are considered, or, more generally, a class of probability measures with certain properties can be considered.
This branch of the literature corresponds to robust methods and methods taking model uncertainty into account. 
On the other hand, we can start from the observable data in the financial market and try to infer bounds on the quantities of interest by utilizing the data and making only structural assumptions about the market, such as the absence of arbitrage opportunities, but without making any assumption about the probabilistic model governing the data.
This branch of the literature corresponds to the model-free and data-driven methods.
A graphical representation these two approaches appears in Figure \ref{fig:models}.
These approaches interact via information and uncertainty; if we add enough uncertainty in the model-specific world, we shall (in theory) approach the model-free setting. 
Conversely, if we add sufficient information in the model-free world we shall approach (in theory, again) the model-specific setting.

\begin{figure}
\centering
\begin{tikzpicture}[node distance=1.5cm, auto,]
 
\node[punkt] (market) {parameters};
\node[punkt, inner sep=5pt, below=0.5cm of market]
    (formidler) {models};
\node[punkt, inner sep=5pt, below=0.5cm of formidler]
    (measure) {measures};

 \node[above=of market] (dummy) {};
 \node[punkt, right=of dummy] (t) {Model-specific}
   edge[pil, bend left=45] (market.east) 
   edge[pil, bend left=45] (formidler.east) 
   edge[pil, bend left=45] (measure.east); 
 \node[punkt, left=of dummy] (g) {Model-free}
   edge[pil, bend right=45] (market.west)
   edge[pil, bend right=45] (formidler.west)
   edge[pil, bend right=45] (measure.west);

\begin{scope}[transform canvas={yshift=.25em}]
    \draw (g) edge[->,thick] node[above] {information} (t);
\end{scope}

\begin{scope}[transform canvas={yshift=-.25em}]
    \draw (g) edge[<-,thick] node[below] {uncertainty} (t);
\end{scope}

\end{tikzpicture}
\caption{Illustration of the interaction between model-free and model-specific settings.}
\label{fig:models}
\end{figure}
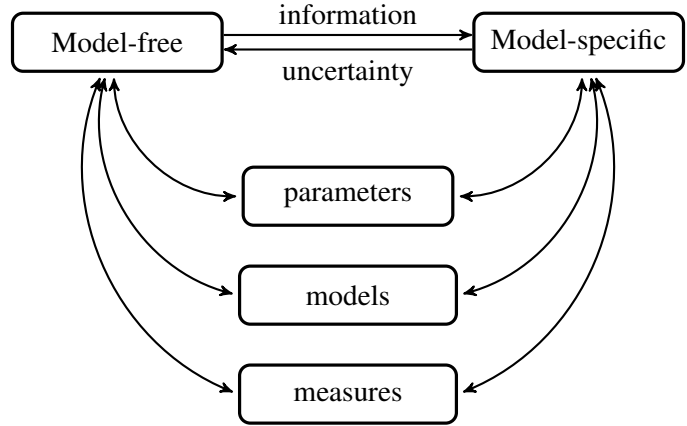

The literature on both approaches is vast and quickly expanding, hence we mention here only articles related to model-free bounds and model-uncertainty related to the valuation of multi-asset derivatives.
The readers interested in applications to quantitative risk management or the valuation of path-dependent derivatives (martingale optimal transport) should refer to \citet{Ruschendorf_Vanduffel_Bernard_2024} and \citet{HenryLabordere_2017} and the references therein. 

In the frameworks of dependence- and model-uncertainty and robust methods, several authors have derived bounds for multi-asset options using tools from probability theory, such as copulas and Fr\'echet--Hoeffding bounds, see \textit{e.g.} \citet{Chen_Deelstra_Dhaene_Vanmaele_2008, Dhaene_etal_2002_b, Dhaene_etal_2002_a}, and \citet{Hobson_Laurence_Wang_2005_2, Hobson_Laurence_Wang_2005_1}. 
These bounds were more recently enforced by additional information on the dependence structure, which led to the creation of improved Fr\'echet--Hoeffding bounds and the pricing of multi-asset options in the presence of additional information on the dependence structure, see \textit{e.g.}  \citet{tankov}, \citet{lux2016}, and \citet{puccetti2016}.
The setting of dependence uncertainty is closely linked with optimal transport theory, and its tools have also been used in order to derive bounds for multi-asset option prices, see \textit{e.g.} \citet{bartl2017marginal} for a formulation in the presence of additional information on the joint distribution. 
More recently, \citet{aquino2019bounds,eckstein2019computation}, and \citet{eckstein2019robust} have translated the model-free superhedging problem into an optimization problem over classes of functions, and used neural networks and the stochastic gradient descent algorithm for the computation of the bounds. 

In the framework of model-free and data-driven methods, ideas from operations research and optimization have been applied for the computation of model-free bounds.
\citet{bertsimas2002on} considered the computation of model-free bounds on a single-asset call option given the moments of the underlying asset price as well as the model-free bounds on a single-asset call option given other single-asset call and put option prices. 
\citet{dAspremont_ElGhaoui_2006} considered a framework where the prices of forwards and single-asset call options are known, and computed upper and lower bounds on basket options prices using linear programming. 
This work was later extended by various authors.
\citet{Pena_Vera_Zuluaga_2010} improved the results of \cite{dAspremont_ElGhaoui_2006} when computing the lower bounds on basket options prices in two special cases: (i) when the number of assets is limited to two and prices of basket options are known; and (ii) when the prices of only a forward and a single-asset call option per asset are known. 
\citet{pena2012computing} developed a linear programming-based approach for the problem of computing the upper price bound of a basket option given bid and ask prices of vanilla call options.
\citet{pena2010computing} studed the problem of computing the upper and lower bounds on basket and spread option prices when the prices of other basket and spread option prices are known. 
\citet{daum2011novel} developed a discretization-based algorithm for solving linear semi-infinite programming problems that returns a feasible solution, and applied the algorithm to compute the upper bounds on basket or spread options prices when single-asset call, put, and exotic options prices are known. 
\citet{Cho_Kim_Lee_2016} developed methods similar to \cite{daum2011novel} but for lower bounds on basket or spread options prices. 
Moreover,  \citet{Neufeld_Papapantoleon_Xiang_2023} considered a framework where bid and ask prices of single- and multi-asset options prices are observed, and developed efficient algorithms for the computation of model-free bounds that work in rather high-dimensions (\textit{e.g.} $d=60$).

The present article takes a hybrid approach between model-uncertainty and model-free and data-driven methods in finance.
We start by considering the classical setting of dependence uncertainty, where the marginals are considered (fully) known and the dependence structure is considered (fully) unknown.
In the classical setting of dependence uncertainty, bounds for multi-asset options have been derived by several methods, including Fr\'echet--Hoeffding bounds and copulas, and optimal transport theory.
However, these bounds are typically rather wide and not very informative. 
Moreover, there is usually additional information about the dependence structure in the market, in the form of traded multi-asset options, that is not taken into account.
We would thus like to extend the classical setting of dependence uncertainty by taking the additional, option-implied information into account, in order to have a realistic setting and deduce sharper bounds.
We take this information in a data-driven manner into account, that is, we assume there are some traded multi-asset options in the market, and then consider all the multivariate probability measures with known marginals that produce consistent prices for these options.  
The resulting framework offers a realistic and tractable approach to model uncertainty and data-driven methods when interested in the pricing of multi-asset derivatives.

Once the framework has been set-up, we provide a fundamental theorem of asset pricing in order to characterize the absence of arbitrage opportunities in this setting, and a superhedging duality.
The latter is an extension of the classical optimal transport duality, which is interesting in its own right, and also allows to translate an abstract maximization problem over probability measure into a more tractable minimization problem over trading strategies.
In order to solve numerically the latter minimization problem, we apply the method of \citet{eckstein2019computation}, that is, we first reduce the problem in a finite dimensional setting using neural networks and then enforce the inequality constraint by a penalization approach.

The numerical experiments conducted using artificial data show that the resulting numerical method is fast and accurate and, quite importantly, the computational time scales linearly with the number of traded assets.
There are also two interesting observations about the effect of additional information; in general, we can observe that more additional information makes the distance between the model-free bounds and the reference price tighter, which means that the bounds are more informative.
However, we can also observe in the numerical experiments, that ``relevant'' information, \textit{i.e.} traded options with the same payoff structure as the derivative we are interested in, result in a larger reduction of the distance between the model-free bounds and the reference price compared to other information, \textit{i.e.} traded options with different payoff structure.
This observation remains true even when some pieces of information are completely discarded and not used in the computation of the bounds.
Therefore, given the perennial trade-off between accuracy and efficiency, we can advise users to select and prioritize relevant information in the application of these methods. 

This article is organized as follows: in \cref{sec:FTAP-SHD}, we describe our framework that combines known marginals and additional information on the dependence structure, derive a fundamental theorem of asset pricing and deduce a superhedging duality.
In \cref{sec:optimal}, we briefly  characterize the optimal measures and trading strategies.
In \cref{sec:numerical-scheme}, we present the numerical method, which is based on a penalization approach and an approximation by neural networks.
Finally, in \cref{sec:results}, we present numerical experiments using artificial data and discuss their implications.


\section{Fundamental theorem and super-hedging duality}
\label{sec:FTAP-SHD}

Let us start by introducing some useful notions and notation.
Then, we will formulate and prove a fundamental theorem of asset pricing and a super-hedging duality in our setting, where the marginal distributions are known explicitly, while there is also additional information available in the financial market, in the form of traded prices of some multi-asset derivatives.

Consider a financial market where $d$ (primary) assets are traded and set $\mathcal{J}=\{1,\dots,d\}$.
Let $\mathscr{M}(\R^d)$ denote the set of all finite measures and $\mathscr{P}(\R^d)$ denote the set of all probability measures on the Borel $\sigma$-field of $\R^d$.
Let $\mathcal{I}$ be an arbitrary index set, $|\mathcal{I}|<\infty$, and  $\phi_i:\mathbb{R}^d \to \mathbb{R}$, $i \in \mathcal{I}$, be continuous, bounded and measurable functions.
We assume that these functions represent the payoff functions of traded multi-asset derivatives (\textit{i.e.} options depending on multiple primary assets) with traded prices $p_i\in\R_+$.
Therefore, any measure $\mu$ that is consistent with these option prices should satisfy
\[
    \int_{\mathbb{R}^d} \phi_i\d\mu =p_i, \quad\text{ for all } i\in\mathcal{I}. 
\]
In our setting, we assume that the marginal distributions $\nu_j$, $j\in\mathcal{J}$, of the asset prices (log-returns) are known explicitly, while there exists additional information in the market in the form of traded multi-asset derivatives with payoff $\phi_i$ and traded price $p_i$, $i\in\mathcal{I}$.
The following set contains all probability measures that are consistent with the market and option-implied information available in our setting
\begin{align}
\mathcal{Q}
    =\bigg\{ \mu\in\mathscr{P}(\mathbb{R}^d): \mu_j=\nu_j, j \in \mathcal{J}, \text{ and } \int_{\mathbb{R}^d} \phi_i \d\mu=p_i, i \in \mathcal{I} \bigg\}. 
\end{align}   

Let $f:\R^d\to\R$ denote a continuous, bounded and measurable function, which denotes the payoff function of a financial derivative. 
We would like to compute upper and lower model-free bounds for the price of this derivative in our setting, thus we would like to compute 
\begin{align*}
    \sup_{\mu\in\mathcal{Q}} \int_{\R^d} f \ud\mu
        \quad \text{ and } \quad 
    \inf_{\mu\in\mathcal{Q}} \int_{\R^d} f \ud\mu.
\end{align*}

On the `dual' side of the above optimization problem over probability measures, we have another optimization problem over trading strategies for the available assets and derivatives.
Let $\psi_j:\mathbb{R}\to\mathbb{R}$, $j \in \mathcal{J}$, be continuous, bounded and measurable functions, which denote the payoff functions of certain single-asset derivatives.
Let $b_i$ denote the amount of derivatives with payoff $\phi_i$ that is kept in the portfolio, hence $b=(b_1,\dots,b_{|\mathcal{I}|}) \in \mathbb{R}^{|\mathcal{I}|}$, and let $(\psi,b) = \big( (\psi_1,\dots,\psi_d),(b_1,\dots,b_{|\mathcal{I}|}) \big)$ denote the vector that summarizes these trading strategies. 
Moreover, we consider the set of trading strategies that dominate the target payoff $f$, \textit{\textit{i.e.}}
\begin{align}\label{eq:set-theta}
    \Theta(f) = \bigg\{ (\psi,b): \sum_{j\in\mathcal{J}} \psi_j + \sum_{i \in \mathcal{I}} b_i\phi_i \geq f \bigg\},
\end{align}
while the cost associated to such a trading strategy equals
\begin{align}\label{eq:cost}
    \pi(\psi,b) = \sum_{j\in\mathcal{J}} \int_{\mathbb{R}} \psi_j\d\nu_j + \sum_{i \in \mathcal{I}} b_ip_i.
\end{align}
Finally, let us introduce the following functional
\begin{align}\label{eq:primal}
    \Phi(f) = \inf \Big\{ \pi(\psi,b): (\psi,b) \in \Theta(f) \Big\},
\end{align}
which is the infimum over the cost of all trading strategies that dominate the target payoff $f$.

\begin{definition}
A trading strategy $(\psi, b)$ that satisfies 
\begin{equation*}
    (\psi,b) \in \Theta(\epsilon) \text{ \quad and \quad } \pi(\psi,b)=\sum_{j\in\mathcal{J}}\int_{\mathbb{R}}\psi_j\d\nu_j+\sum_{i \in \mathcal{I}} b_ip_i \leq 0
\end{equation*}
for some $\epsilon>0$, is called \emph{uniform strong arbitrage}.
\end{definition}

The main results of this section are a version of the first fundamental theorem of asset pricing and a superhedging duality, tailored to this model-free setting with additional information.

\begin{theorem}[Fundamental Theorem]\label{thm:FTAP}
There does not exist a uniform strong arbitrage strategy in the market if and only if the set $\mathcal{Q}$ is non empty.
\end{theorem}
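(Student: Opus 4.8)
The plan is to prove the two implications separately; the forward one is immediate, while the converse carries the real content. For the direction ``$\mathcal{Q}\neq\emptyset$ implies no uniform strong arbitrage'' I would argue by contradiction: if $\mu\in\mathcal{Q}$ and $(\psi,b)\in\Theta(\epsilon)$ with $\pi(\psi,b)\le 0$, then integrating the pointwise inequality $\sum_{j\in\mathcal J}\psi_j+\sum_{i\in\mathcal I} b_i\phi_i\ge\epsilon$ against $\mu$, and using that each $\psi_j$ depends on a single coordinate (so $\int\psi_j\,\d\mu=\int\psi_j\,\d\nu_j$ by the marginal condition) together with $\int\phi_i\,\d\mu=p_i$, yields $\pi(\psi,b)\ge\epsilon>0$, a contradiction.

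For the converse I would assume there is no uniform strong arbitrage and construct a measure in $\mathcal{Q}$ by a Hahn--Banach argument. The first step is to record the key consequence of the no-arbitrage hypothesis: if $g=\sum_{j\in\mathcal J}\psi_j+\sum_{i\in\mathcal I} b_i\phi_i\ge 0$ pointwise, then $\pi(\psi,b)\ge 0$; indeed, otherwise a shift of a single $\psi_j$ by a small constant $\epsilon>0$ would give a strategy in $\Theta(\epsilon)$ with still-negative cost. On the linear subspace $\mathcal{S}\subset C_b(\R^d)$ of functions $g$ of the above form (with $\psi_j\in C_b(\R)$), this monotonicity shows that $I(g):=\sum_{j\in\mathcal J}\int\psi_j\,\d\nu_j+\sum_{i\in\mathcal I} b_i p_i$ is well defined (applying the observation to $\pm g$ gives $I(g)=0$ whenever $g\equiv 0$, so the value is independent of the representation), linear and positive; moreover $I(1)=1$, and positivity gives the domination $I(g)\le\sup_{x}g(x)$ for all $g\in\mathcal{S}$.

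Next I would extend $I$ by Hahn--Banach, dominated by the sublinear functional $g\mapsto\sup_x g(x)$, to a linear functional $\tilde I$ on $C_b(\R^d)$; this domination forces $\tilde I$ to be positive with $\tilde I(1)=1$. The crux, and the step I expect to be the main obstacle, is to represent $\tilde I$ by a genuine (countably additive) Borel probability measure $\mu$, since a priori Hahn--Banach only furnishes a finitely additive mean on $C_b(\R^d)$. Here I would exploit that the prescribed marginals $\nu_j$ are tight: for bump functions $u_R\in C_b(\R)$ vanishing on $[-R,R]$ and equal to $1$ outside $[-R-1,R+1]$ one has $\tilde I(u_R\circ\mathrm{pr}_j)=\int u_R\,\d\nu_j$, which is small for $R$ large. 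Summing over the finitely many coordinates controls the $\tilde I$-mass outside large boxes $[-R,R]^d$, giving the tightness needed for a Riesz-type representation and a measure $\mu\in\mathscr P(\R^d)$ with $\int g\,\d\mu=\tilde I(g)$ on $C_b(\R^d)$. Testing against $\psi_j\in C_b(\R)$ yields $\mu_j=\nu_j$, and testing against $\phi_i$ yields $\int\phi_i\,\d\mu=p_i$, so $\mu\in\mathcal{Q}$.

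An alternative route bypasses the regularity subtlety: the set of couplings with marginals $\nu_1,\dots,\nu_d$ is weakly compact (tightness being inherited from the marginals) and $\mu\mapsto(\int\phi_i\,\d\mu)_{i\in\mathcal I}$ is weakly continuous, so its image $K\subset\R^{|\mathcal I|}$ is compact and convex; if $p\notin K$ one separates $p$ from $K$ by some $b\in\R^{|\mathcal I|}$ and then invokes multi-marginal Kantorovich duality to turn the strict separation into dominating functions $\psi_j$, producing a uniform strong arbitrage. I would present the Hahn--Banach argument as the main proof and treat the tightness-based passage to a countably additive measure as the decisive technical point.
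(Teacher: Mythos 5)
Your proposal is correct, and your easy implication (integrating the superhedge $\sum_{j}\psi_j+\sum_{i}b_i\phi_i\ge\epsilon$ against any $\mu\in\mathcal{Q}$, using the marginal and price constraints, to get $\pi(\psi,b)\ge\epsilon>0$) is word-for-word the paper's argument for that direction. For the hard implication, however, you take a genuinely different route. The paper never constructs the consistent measure by hand: it shows that $\Phi$ is linear, increasing and continuous (Proposition \ref{dr}), invokes the representation theorem of Bartl--Cheridito--Kupper--Tangpi to write $\Phi(f)=\max\{\int f\,\d\mu-\Phi^*(\mu):\mu\in\mathscr{M}(\R^d)\}$, and then computes the conjugate (Proposition \ref{lem:SHD}), namely $\Phi^*=0$ on $\mathcal{Q}$ and $\Phi^*\equiv+\infty$ off $\mathcal{Q}$, with the no-arbitrage hypothesis entering only through the normalization $\Phi(m)=m$ (your Step 1 plays exactly the role of that lemma); since $\Phi$ is real-valued, the maximum must be attained at some $\mu$ with $\Phi^*(\mu)<\infty$, forcing $\mathcal{Q}\neq\emptyset$. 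You instead extend the pricing functional $I$ from the subspace $\mathcal{S}$ of attainable payoffs via Hahn--Banach under the sublinear majorant $g\mapsto\sup_x g(x)$, and then upgrade the finitely additive mean to a countably additive probability measure using tightness of the marginals; your bump-function bound $|\tilde I(g)|\le\sum_j\int u_R\,\d\nu_j$ for $|g|\le1$ vanishing on the box $[-(R+1),R+1]^d$ is exactly what a Daniell--Stone (or Riesz/Varadarajan) argument requires, since Dini's theorem on the compact box then gives $\tilde I(g_n)\searrow0$ whenever $g_n\searrow0$. In other words, you prove by hand precisely the point the paper outsources: in the cited duality theorem the continuity hypothesis $f^n\searrow0\Rightarrow\Psi(f^n)\searrow0$ is what secures countable additivity, and the paper only needs to verify this continuity for $\Phi$ (Step IV of Proposition \ref{dr}). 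What the paper's route buys is economy, since the fundamental theorem and the superhedging duality (Theorem \ref{thm:SPH}) both drop out of the same two propositions; what your route buys is a self-contained, constructive proof of the existence of a calibrated measure that does not depend on the superhedging machinery. Your alternative argument (weak compactness of the couplings $\Pi(\nu_1,\dots,\nu_d)$, compact convex image of $\mu\mapsto(\int\phi_i\,\d\mu)_{i\in\mathcal{I}}$, separation plus multi-marginal Kantorovich duality with $C_b$ potentials to convert $p\notin K$ into a uniform strong arbitrage) is also sound and arguably the shortest correct path, at the cost of importing Kantorovich duality as a black box.
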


\begin{theorem}[Superhedging Duality]\label{thm:SPH}
Let $f:\mathbb{R}^d \rightarrow \mathbb{R}$ be a continuous and bounded function. 
Assuming there does not exist uniform strong arbitrage in the market, then
\begin{equation*}
    \Phi(f)=\max_{\mu \in \mathcal{Q}}\int_{\mathbb{R}^d}f \d\mu.
\end{equation*}
\end{theorem}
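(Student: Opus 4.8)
The plan is to prove the two inequalities separately; weak duality is immediate, while the nontrivial direction, together with attainment of the supremum, follows from a Hahn--Banach separation argument on the space $C_b(\mathbb{R}^d)$ of continuous bounded functions. Throughout I use that, by \cref{thm:FTAP}, the absence of uniform strong arbitrage makes $\mathcal{Q}$ nonempty. For weak duality I fix any $\mu\in\mathcal{Q}$ and any $(\psi,b)\in\Theta(f)$, integrate the inequality $\sum_{j}\psi_j+\sum_i b_i\phi_i\ge f$ against $\mu$, and use $\mu_j=\nu_j$ together with $\int\phi_i\,\d\mu=p_i$ to obtain $\int f\,\d\mu\le\pi(\psi,b)$. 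Taking the supremum over $\mu$ and the infimum over $(\psi,b)$ yields $\sup_{\mu\in\mathcal{Q}}\int f\,\d\mu\le\Phi(f)$; since $f$ is bounded, $f\le\norm{f}_\infty\cdot 1$ gives $\Phi(f)\le\norm{f}_\infty<\infty$, so all quantities are finite.

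For the reverse inequality I set $\mathcal{C}=\{\sum_{j}\psi_j+\sum_i b_i\phi_i : \psi_j\in C_b(\mathbb{R}),\ b_i\in\mathbb{R}\}$, a linear subspace of $C_b(\mathbb{R}^d)$, and define $L:\mathcal{C}\to\mathbb{R}$ by $L(\sum_{j}\psi_j+\sum_i b_i\phi_i)=\sum_{j}\int\psi_j\,\d\nu_j+\sum_i b_ip_i$. Using $\mathcal{Q}\neq\emptyset$ I first check that $L$ is well defined (two representations of the same function integrate identically against any $\mu\in\mathcal{Q}$) and monotone, so that $\Phi(f)=\inf\{L(g):g\in\mathcal{C},\ g\ge f\}$. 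I then introduce the sublinear functional $p(h)=\inf\{L(g):g\in\mathcal{C},\ g\ge h\}$ on $C_b(\mathbb{R}^d)$ and verify it is finite, positively homogeneous, subadditive, monotone, and equal to $L$ on $\mathcal{C}$.

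The crux is to extend $L$ to a linear functional $\Lambda$ on $C_b(\mathbb{R}^d)$ dominated by $p$ with $\Lambda(f)=\Phi(f)$. On $\mathcal{C}+\mathbb{R}f$ I define $\tilde L(g+tf)=L(g)+t\Phi(f)$ and verify $\tilde L\le p$: the key identity $p(f+g')=L(g')+\Phi(f)$ for $g'\in\mathcal{C}$, obtained by the substitution $h\mapsto h+g'$ inside the infimum defining $p$, settles the case $t>0$, while subadditivity of $p$ (giving $p(g'-f)\ge L(g')-\Phi(f)$) settles $t<0$. Hahn--Banach then produces $\Lambda\le p$ on all of $C_b(\mathbb{R}^d)$ with $\Lambda=L$ on $\mathcal{C}$ and $\Lambda(f)=\Phi(f)$. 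Positivity of $\Lambda$ follows because $p(h)\le L(0)=0$ whenever $h\le 0$, and $\Lambda(1)=L(1)=\nu_1(\mathbb{R})=1$ since the constant $1$ lies in $\mathcal{C}$.

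The final and hardest step is to represent $\Lambda$ by a countably additive probability measure $\mu$ on $\mathbb{R}^d$. This is the main obstacle, because on the noncompact space $\mathbb{R}^d$ a positive unit-mass functional on $C_b$ corresponds \emph{a priori} only to a finitely additive set function, and mass may escape to infinity. I would rule this out via tightness inherited from the marginals: testing $\Lambda$ against functions depending on a single coordinate gives $\Lambda(\psi_j)=\int\psi_j\,\d\nu_j$, so each coordinate projection of $\Lambda$ coincides with the tight probability measure $\nu_j$; approximating indicators of large compact subsets of $\mathbb{R}$ by continuous functions and summing over $j\in\mathcal{J}$ then forces $\Lambda$ to lose no mass at infinity, so by the Riesz representation theorem combined with Prokhorov tightness it is represented by a Radon probability measure $\mu$. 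By construction $\mu_j=\nu_j$ and $\int\phi_i\,\d\mu=\Lambda(\phi_i)=p_i$, hence $\mu\in\mathcal{Q}$, while $\int f\,\d\mu=\Lambda(f)=\Phi(f)$. Together with weak duality, this $\mu$ attains the supremum, yielding $\Phi(f)=\max_{\mu\in\mathcal{Q}}\int_{\mathbb{R}^d} f\,\d\mu$.
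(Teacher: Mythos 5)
Your proof is correct, but it takes a genuinely different route from the paper's. The paper never runs a Hahn--Banach argument of its own: it verifies that $\Phi$ is linear, increasing and continuous (Proposition \ref{dr}), invokes the representation theorem of Bartl--Cheridito--Kupper--Tangpi to obtain $\Phi(f)=\max\big\{\int f\,\d\mu-\Phi^*(\mu):\ \mu\in\mathscr{M}(\mathbb{R}^d)\big\}$, and then computes the conjugate: a lemma shows $\Phi(m)=m$ under no-arbitrage, and Proposition \ref{lem:SHD} shows $\Phi^*=0$ on $\mathcal{Q}$ and $+\infty$ off $\mathcal{Q}$ via a four-case analysis (total mass, marginals, prices too high, prices too low) with explicit Urysohn-type test functions. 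You instead re-prove the relevant special case of that cited theorem from scratch: extend the pricing functional $L$ from $\mathcal{C}$ by Hahn--Banach under the sublinear envelope $p$, arranged so that $\Lambda(f)=\Phi(f)$, then represent $\Lambda$ by a measure. Your route is self-contained, bypasses the conjugate case analysis entirely, directly constructs the maximizer, and confronts head-on the one analytic difficulty the paper delegates to the citation, namely countable additivity of the extension on the non-compact space $\mathbb{R}^d$; your tightness argument via single-coordinate functions in $\mathcal{C}$ is the right fix, though the finish is cleaner via Daniell--Stone (or Riesz on compactly supported functions plus your tightness bound) than via ``Riesz plus Prokhorov'', and that step deserves the few lines of Dini-type detail since it is exactly where a positive normalized functional on $C_b$ can fail to be a measure. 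One structural caveat: your reliance on \cref{thm:FTAP} for $\mathcal{Q}\neq\emptyset$ is legitimate (the paper proves \cref{thm:FTAP} independently of \cref{thm:SPH}) but essential rather than cosmetic --- without some $\mu\in\mathcal{Q}$ your $L$ need not be well defined or monotone on $\mathcal{C}$ and $p$ need not be finite --- whereas the paper uses the no-arbitrage hypothesis only through $\Phi(m)=m$ and obtains nonemptiness of $\mathcal{Q}$ as an output of its duality machinery rather than as an input.
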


The proof of these two theorems follows from a sequence of propositions and lemmata that are presented below.
A main tool for these proofs is the duality result of \citet{Bartl_Cheridito_Kupper_Tangpi_2017}, which states that every convex, increasing and continuous functional $\Psi:C_b \rightarrow \mathbb{R}$ admits the following representation 
\begin{equation*}
    \Psi(f) = \max \bigg\{ \int_{ \mathbb{R}^d} f \d\mu - \Psi^*(\mu),\ \ \mu \in \mathscr{M}(\mathbb{R}^d) \bigg\},
\end{equation*}
where $C_b$ denotes the set of continuous and bounded functions, $f:\mathbb{R}^d \rightarrow \mathbb{R}$, and $\Psi^*(\mu)=\sup \big \{ \int f \d\mu -\Psi(f) \big \}$ is the convex conjugate of $\Psi$. 

\begin{proposition}\label{dr}
The functional $\Phi(f)$ in \eqref{eq:primal} admits a dual representation, \emph{\textit{i.e.}}
\begin{equation*}
    \Phi(f) = \max\bigg\{ \int_{\mathbb{R}^d} f \d\mu - \Phi^*(\mu), \ \ \mu \in \mathscr{M} (\mathbb{R}^d) \bigg\},
\end{equation*}
for all $f:\mathbb{R}^d \rightarrow \mathbb{R}$ continuous and bounded.
\end{proposition}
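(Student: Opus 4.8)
The plan is to obtain the representation as a direct application of the duality theorem of \citet{Bartl_Cheridito_Kupper_Tangpi_2017} quoted above: once we verify that $\Phi$ is a real-valued, convex, increasing and (norm-)continuous functional on $C_b$, that theorem applies verbatim and produces $\Phi(f)=\max_\mu\{\int_{\mathbb{R}^d} f\d\mu-\Phi^*(\mu)\}$ with $\Phi^*$ the convex conjugate. Thus the whole proof reduces to checking these structural properties of the superhedging functional defined in \eqref{eq:primal}.

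Convexity and monotonicity are immediate from the linear structure of the problem. For convexity, the map $(\psi,b)\mapsto\sum_{j}\psi_j+\sum_i b_i\phi_i$ and the cost $\pi$ in \eqref{eq:cost} are both linear, so if $(\psi^1,b^1)\in\Theta(f_1)$ and $(\psi^2,b^2)\in\Theta(f_2)$ then $\lambda(\psi^1,b^1)+(1-\lambda)(\psi^2,b^2)\in\Theta(\lambda f_1+(1-\lambda)f_2)$ for $\lambda\in[0,1]$; passing to the infimum over dominating strategies yields $\Phi(\lambda f_1+(1-\lambda)f_2)\le\lambda\Phi(f_1)+(1-\lambda)\Phi(f_2)$. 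For monotonicity, $f_1\le f_2$ gives $\Theta(f_2)\subseteq\Theta(f_1)$ and hence $\Phi(f_1)\le\Phi(f_2)$.

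Continuity I would establish through cash-additivity together with monotonicity. Since constant functions are admissible single-asset payoffs and each marginal $\nu_j$ is a probability measure, adding a constant $c$ to one leg of a strategy changes the cost exactly by $c$; taking infima in both directions gives $\Phi(f+c)=\Phi(f)+c$ for every $c\in\mathbb{R}$. Combining this with monotonicity and the elementary bounds $f-\norm{f-g}_\infty\le g\le f+\norm{f-g}_\infty$ yields $\abs{\Phi(f)-\Phi(g)}\le\norm{f-g}_\infty$, so $\Phi$ is $1$-Lipschitz for the supremum norm, and in particular continuous.

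The \emph{main obstacle} is finiteness (real-valuedness), which is what the cited theorem genuinely requires. The upper bound is harmless: since $f\le\norm{f}_\infty$, the constant strategy dominates $f$, so $\Theta(f)\neq\emptyset$ and $\Phi(f)\le\norm{f}_\infty<\infty$. The difficulty is the lower bound: in the presence of a uniform strong arbitrage one can add arbitrarily large multiples of the arbitrage strategy to any dominating strategy and drive the cost to $-\infty$, so $\Phi$ is real-valued precisely when such arbitrage is absent. I would therefore either invoke the standing no-arbitrage hypothesis here, ensuring $\Phi>-\infty$ and hence that $\Phi$ is proper, or—if the representation is to be recorded before \cref{thm:FTAP} is available—note that it persists in the extended-valued sense. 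In either reading, the biconjugate identity furnished by \citet{Bartl_Cheridito_Kupper_Tangpi_2017} closes the proof.
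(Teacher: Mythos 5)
Your overall strategy is the same as the paper's: verify that $\Phi$ is a real-valued, convex, increasing, continuous functional on $C_b$ and then invoke \citet[Theorem 2.1]{Bartl_Cheridito_Kupper_Tangpi_2017}. Within that strategy, your convexity argument (convex combinations of dominating strategies dominate the convex combination of the payoffs) is cleaner than the paper's, which establishes full linearity of $\Phi$ via homogeneity and additivity of $\Theta$; your monotonicity step is identical to the paper's Step III; and your insistence on checking finiteness of $\Phi$, and its link to absence of uniform strong arbitrage, is a point the paper passes over entirely.

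There is, however, a genuine gap in your continuity step. The continuity hypothesis in the cited theorem is not sup-norm continuity but continuity from above along pointwise decreasing sequences: $\Phi(f^n)\searrow\Phi(0)$ whenever $f^n\searrow 0$ pointwise --- this is exactly the property the paper's Step IV sets out to verify. On the non-compact space $\mathbb{R}^d$ a pointwise decreasing sequence in $C_b$ need not converge uniformly (Dini's theorem requires compactness), so your $1$-Lipschitz estimate $\abs{\Phi(f)-\Phi(g)}\le\norm{f-g}_\infty$ does not imply the required property; and the property cannot be dropped, because it is precisely what forces the representing measures to be countably additive, i.e.\ to lie in $\mathscr{M}(\mathbb{R}^d)$ rather than in the norm dual of $C_b$, which consists of finitely additive measures. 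Concretely, the functional $\Psi(f)=\limsup_{\abs{x}\to\infty}f(x)$ is increasing, convex, cash-additive and $1$-Lipschitz on $C_b(\mathbb{R}^d)$, yet $\Psi^*(\mu)=+\infty$ for every $\mu\in\mathscr{M}(\mathbb{R}^d)$, so the asserted max-representation fails for it; hence the list of properties you verify is strictly too weak to yield the conclusion. To close the gap one must use a structural feature of $\Phi$ that $\Psi$ lacks, namely tightness of the marginals: given $f^n\searrow 0$ with $\norm{f^1}_\infty\le M$ and $\epsilon>0$, choose compacts $K_j\subset\mathbb{R}$ with $\nu_j(K_j^c)\le\epsilon/(dM)$ and functions $h_j\in C_b(\mathbb{R})$, $0\le h_j\le 1$, $h_j=0$ on $K_j$, $h_j=1$ outside a relatively compact neighbourhood $V_j\supset K_j$; the strategy $\psi_j=\epsilon/d+Mh_j$, $b=0$, superhedges $f^n$ for $n$ large (by Dini's theorem on the compact set $\overline{V_1}\times\dots\times\overline{V_d}$, and by the term $Mh_j$ off it) at cost at most $2\epsilon$, which gives $\Phi(f^n)\searrow\Phi(0)$. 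For fairness: the paper's own Step IV, which deduces this continuity from ``linearity plus boundedness,'' is open to the very same objection, but as written your proposal does not verify the hypothesis the cited theorem actually needs.
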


\begin{proof}
Consider the functional $\Phi:C_b \rightarrow \mathbb{R}$ as defined in \eqref{eq:primal}, \textit{\textit{i.e.}}
\begin{equation*}
\begin{split}
\Phi(f)
    & = \inf \big\{ \pi(\psi,b): \ (\psi,b) \in \Theta(f) \big \}\\ 
    & = \inf \bigg\{ \sum_{j\in\mathcal{J}} \int_\R \psi_j \d \nu_j + \sum_{i \in \mathcal{I}} b_i\pi_i: \sum_{j\in\mathcal{J}} \psi_j + \sum_{i \in \mathcal{I}} b_i\phi_i \geq f \bigg\}.
\end{split}
\end{equation*}
Now, it suffices to show that $\Phi$ is a convex, increasing and continuous functional, and then we can employ \citet[Theorem 2.1]{Bartl_Cheridito_Kupper_Tangpi_2017} in order to conclude.

\textit{Step I.}
Initially, we need to show that the set $\Theta$ in \eqref{eq:set-theta} is homogeneous and additive.
In order to show the homogeneity of $\Theta$, \textit{\textit{i.e.}} that $\Theta(\lambda f) = \lambda \Theta(f)$ for $\lambda>0$, let $(\Tilde{\psi},\Tilde{b}) \in \lambda \Theta$. 
Then $(\Tilde{\psi},\Tilde{b})=(\lambda \psi, \lambda b)$ with $(\psi,b) \in \Theta(f)$. 
We want to show that $(\Tilde{\psi},\Tilde{b}) \in \Theta(\lambda f)$. 
Indeed we have that 
\begin{equation*}
\sum_j \Tilde{\psi}_j + \sum_i \Tilde{b}_i\phi_i
    = \sum_j \lambda \psi_j + \sum_i \lambda b_i \phi_i
    = \lambda \Big( \sum_j \psi_j + \sum_i b_i \phi_i \Big)
    \geq \lambda f,
\end{equation*}
which is true since $(\psi,b) \in \Theta(f)$.
Therefore $\lambda \Theta(f) \subset \Theta(\lambda f)$.
Conversely, let $(\Tilde{\psi},\Tilde{b}) \in \Theta(\lambda f), \textit{\textit{i.e.}} \sum_j\Tilde{\psi}_j+\sum_i\Tilde{b}_i\phi_i \geq \lambda f$. 
We want to show that $(\Tilde{\psi},\Tilde{b}) \in \lambda \Theta(f)$, in other words we want to show that there exists $(\psi,b) \in \Theta(f)$ such that $(\Tilde{\psi}, \Tilde{b})=(\lambda \psi,\lambda b)$, where $\sum_j\psi_j+\sum_ib_i\phi_i \geq f.$
Let us note that $\sum_j\Tilde{\psi}_j+\sum_i\Tilde{b}_i\phi_i \geq \lambda f$ is equivalent to $\sum_j\frac{1}{\lambda}\Tilde{\psi}_j+\sum_i\frac{1}{\lambda}\Tilde{b}_i\phi_i\geq f$, therefore there exists $(\psi=\frac{1}{\lambda}\Tilde{\psi}, b=\frac{1}{\lambda}\Tilde{b})$ such that $(\psi,b)\in \Theta(f)$. 
Hence $\Theta(\lambda f) \subset \lambda \Theta(f)$.

In order to show the additivity of $\Theta$, \textit{\textit{i.e.}} that $\Theta(f)+\Theta(g) = \Theta(f+g)$, let $(\psi,b) \in \Theta(f)+\Theta(g)$. 
Then, there exist $(\psi^1, b^1) \in \Theta(f)$ and $(\psi^2, b^2) \in \Theta(g)$ such that $\psi=\psi^1+\psi^2$ and $b=b^1+b^2$, in the sense that $\psi_j=\psi_j^1+\psi_j^2$, for all $j\in\mathcal{J}$, and $b_i=b_i^1+b_i^2$, for all $i\in\mathcal{I}$, where $\sum_j \psi_j^1+\sum_ib_i^1\phi_i  \geq f$ and $\sum_j\psi_j^2+\sum_i+b_i^2\phi_i\geq g$.
We want to show that $(\psi, b)\in \Theta(f+g)$, which is true since
\begin{equation*}
\begin{split}
\sum_j\psi_j + \sum_ib_i\phi_i 
    &= \sum_j(\psi_j^1+\psi_j^2) + \sum_i(b_i^1+b_i^2)\phi_i \\
    &= \Big (\sum_j \psi_j^1+\sum_i b_i^1 \phi_i\Big )+ \Big (\sum_j \psi_j^2+\sum_i b_i^2 \phi_i\Big ) 
    \geq f+g.
\end{split}
\end{equation*}
Conversely, let $(\psi,b) \in \Theta(f+g)$. 
Then we have $\sum_j \psi_j+\sum_i b_i\phi_i \geq f+g$.
We want to show that $(\psi, b) \in \Theta(f)+\Theta(g)$.
In other words, we want to show that there exist $(\psi^1,b^1) \in \Theta(f)$ and $(\psi^2,b^2) \in \Theta(g)$ such that $\psi=\psi^1+\psi^2$ and $b=b^1+b^2$.
However, since $\psi=\psi^+-\psi^-$ and $b_i=b_i^+-b_i^-$, where $\psi_j^+=\max\{\psi_j(x),0\}$, $\psi_j^-=\max\{-\psi_i(x),0\}$, for all $j\in\mathcal{J}$ and $b_i^+=\max\{b_i,0\}$, $b_i^-=\max\{-b_i,0\}$ for all $i \in \mathcal{I}$, from the hypothesis we get that
\begin{equation*}
f+g \le \sum_j\psi_j+\sum_ib_i\phi_i
    = \Big (\sum_j \psi_j^+ +\sum_i b_i^+\phi_i \Big ) + \Big( \sum_j(- \psi_j^-)+ \sum_i(-b_i^-\phi_i) \Big).
\end{equation*}
Therefore  $\Theta(f+g) \subset \Theta(f)+\Theta(g)$.

\textit{Step II.} We will show now that the functional $\Phi$ is linear and therefore convex. 
Let us first show that $\Phi(\lambda f)=\lambda \Phi(f)$.
Using the linearity of the cost function $\pi$ of a trading strategy in $\psi,b$, see \eqref{eq:cost}, we have immediately that
\begin{align*}
    \lambda \pi(\psi,b) = \pi(\lambda\psi,\lambda b).
\end{align*}
Hence, using also the homogeneity of the set $\Theta$ from the previous step, we get that
\begin{align*}
\Phi(\lambda f) 
    &= \inf \bigg \{ \pi(\psi,b) : (\psi,b) \in \Theta(\lambda f) \bigg \}\\
    &= \inf \bigg \{ \pi(\psi,b) : (\psi,b) \in \lambda \Theta( f) \bigg \}\\
    &= \inf \bigg \{ \pi(\lambda \Tilde{\psi},\lambda \Tilde b): (\Tilde{\psi},\Tilde{b}) \in  \Theta( f) \bigg \}\\
    &= \inf \bigg \{ \lambda \pi(\Tilde{\psi},\Tilde b): (\Tilde{\psi},\Tilde{b}) \in  \Theta( f) \bigg \}\\
    &= \lambda \inf \bigg \{ \pi(\Tilde{\psi},\Tilde b): (\Tilde{\psi},\Tilde{b}) \in  \Theta( f) \bigg \}
     = \lambda \Phi(f),
\end{align*}
where $\Tilde{\psi}=\frac{\psi}{\lambda}$ and $\Tilde{b_i}=\frac{b_i}{\lambda}$.

Next, we show that $\Phi(f+g) = \Phi(f) +\Phi(g)$. 
Using the additivity of the set $\Theta$, we have that
\begin{align*}
\Phi(f+g) 
    &= \inf \bigg\{ \pi(\psi,b): (\psi,b) \in \Theta(f+g) \bigg \}\\
    &= \inf \bigg\{ \pi(\psi,b): (\psi,b) \in \Theta(f)+\Theta(g) \bigg \}\\
    &= \inf \bigg\{\sum_{j\in\mathcal{J}} \int_{\mathbb{R}} (\psi_j^1 +\psi_j^2) \d \nu_j+\sum_{i \in \mathcal{I}}(b_i^1+b_i^2)p_i: (\psi^1,b^1) \in \Theta(f) , (\psi^2, b^2) \in \Theta(g) \bigg \}\\
    &= \inf \bigg\{ \pi(\psi^1,b^1): (\psi^1,b^1) \in \Theta(f)\bigg \} + \inf \bigg \{ \pi(\psi^2,b^2): (\psi^2,b^2) \in \Theta(g) \bigg \}
     = \Phi(f)+\Phi(g),         
\end{align*}
where $\psi_j^1=\max\{\psi_j,0\}=\psi_j^+$ and  $\psi_j^2=\max\{-\psi_j,0\}=\psi_j^-$.
                    
\textit{Step III.} We will show next that the functional $\Phi$ is increasing, \textit{i.e.} that for $f,g$ continuous and bounded functions with  $f \geq g$ holds $\Phi(f) \geq \Phi(g)$.
Let $(\psi,b) \in \Theta(f)$, then $\sum_j\psi_j+\sum_{i} b_i\phi_i \geq f \geq g$, hence $(\psi,b)\in \Theta(g)$, so that $\Theta(f) \subset \Theta(g)$.
Then, we have immediately that
\begin{align*}
\Phi(f) 
    &= \inf \bigg\{ \pi(\psi,b): (\psi,b)\in\Theta(f) \bigg \}\\
    &\geq \inf\bigg\{ \pi(\psi,b): (\psi,b)\in\Theta(g) \bigg \}
     = \Phi(g).
\end{align*}
  
\textit{Step IV.} Finally, we will show that the functional $\Phi$ is continuous, \textit{i.e.} that for all sequences $(f^n)_n\subset C_b$ with $f^n \searrow 0$ pointwise, then $\Phi(f^n) \searrow \Phi(0)=0$. 
Since $\Phi$ is linear, it suffices to show that it is bounded.
Let $f \in C_b$, without loss of generality we may assume that $\|f\|_{\infty} \leq 1$.
Then, for $\psi_1=\|f\|_{\infty}, \psi_2 = \dots = \psi_d = 0, b=0$, we have that $\Phi(f) \leq \|f\|_{\infty}\leq 1$ and $\Phi(-f)\leq \|-f\|_{\infty}=\|f\|_{\infty}\leq 1 $.
However, $\Phi$ is linear and $\Phi(-f)=-\Phi(f)$, thus $-\Phi(f)\leq 1$ yields that $\Phi(f)\geq 1$.
Therefore $\|\Phi\|_{\infty}=\sup_{f \in C_b, \|f\|_{\infty} \leq 1}|\Phi(f)|\leq 1$ and $\Phi$ is bounded.
\end{proof}

The proof of the superhedging duality is based on the following two results.

\begin{lemma}
Assume that the market is free of uniform strong arbitrage, then $\Phi(m)=m$, for all $m \in \mathbb{R}$.
\end{lemma}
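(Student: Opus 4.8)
The plan is to establish the two inequalities $\Phi(m)\le m$ and $\Phi(m)\ge m$ separately, with the no-arbitrage hypothesis entering only in the second one.

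For the upper bound I would exhibit an explicit dominating strategy for the constant payoff $m$. Taking $\psi_1\equiv m$, $\psi_2=\dots=\psi_d\equiv 0$ and $b=0$ gives $\sum_{j\in\mathcal J}\psi_j+\sum_{i\in\mathcal I}b_i\phi_i=m\ge m$, so $(\psi,b)\in\Theta(m)$; since each $\nu_j$ is a probability measure, the cost is $\pi(\psi,b)=\int_{\R}m\,\d\nu_1=m$, whence $\Phi(m)\le m$.

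For the lower bound I would argue by contradiction: suppose $\Phi(m)<m$. Because $\Phi(m)$ is an infimum, there is a strategy $(\psi,b)\in\Theta(m)$ with $\pi(\psi,b)<m$. I would then shift a single leg of the portfolio by the constant $c:=\pi(\psi,b)$, setting $\tilde\psi_1:=\psi_1-c$ and leaving all other components unchanged. Using again that $\nu_1$ is a probability measure, the new cost is $\pi(\tilde\psi,b)=\pi(\psi,b)-c=0\le 0$, while the domination is degraded only by the same constant, so $\sum_j\tilde\psi_j+\sum_i b_i\phi_i\ge m-c=:\epsilon$. Since $\epsilon=m-\pi(\psi,b)>0$, the pair $(\tilde\psi,b)$ lies in $\Theta(\epsilon)$ and has nonpositive cost, i.e.\ it is a uniform strong arbitrage, contradicting the hypothesis. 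Hence $\Phi(m)\ge m$, and combining the two bounds yields $\Phi(m)=m$.

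The routine upper bound is immediate; the only place requiring the structural assumption is the lower bound, and the main (if modest) point there is the constant-shift construction together with the observation that constants are priced at face value precisely because the $\nu_j$ are probability measures — this is what converts a too-cheap superhedge of $m$ into an arbitrage. Alternatively, one could invoke the linearity of $\Phi$ established in Proposition~\ref{dr} to reduce to the single case $\Phi(1)=1$, but the direct two-sided estimate seems cleaner and self-contained.
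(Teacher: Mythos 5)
Your proof is correct and follows essentially the same strategy as the paper: the upper bound $\Phi(m)\le m$ via the explicit constant superhedge, and the lower bound by contradiction, converting a too-cheap superhedge of $m$ into a uniform strong arbitrage through a constant shift of the $\psi$ legs. The only cosmetic difference is that you absorb the entire shift $c=\pi(\psi,b)$ into a single leg $\tilde\psi_1$, whereas the paper spreads the shift $\frac{m-\epsilon}{d}$ evenly over all $d$ legs; both yield the same contradiction.
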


\begin{proof}
We will show that the absence of uniform strong arbitrage yields that $\Phi(m)=m$, for all $m \in \mathbb{R}$, or, equivalently, that if $\Phi(m)<m-\epsilon$, for $\epsilon >0$, then there exists uniform strong arbitrage.
By definition, we have that $\Phi(m) \leq m$.
Indeed, since $\Phi(f)=\inf \big\{ \pi(\psi,b): (\psi,b)\in \Theta(f) \big\}$ is the infimum over the set $\Theta(f)= \big\{ (\psi,b): \sum_j\psi_j+\sum_{i} b_i \phi_i \geq f \big \}$, then for the trading strategy $(m,0,\dots,0,0,\dots,0)$ we have immediately that $(m,0,\dots,0,0,\dots,0) \in \Theta(m)$.
Now, consider an $\epsilon>0$ such that $\Phi(m)<m - \epsilon$.
Then there exists $(\psi,b) \in \Theta(m)$ such that 
\begin{align}\label{eq:phim}
    \sum_j \int_{\mathbb{R}} \psi_j\d\nu_j + \sum_{i} b_i \phi_i \leq m -\epsilon.
\end{align}
Let us define $\psi_j'(x):=\psi_j(x)-\frac{m-\epsilon}{d}$, for all $x \in \mathbb{R}$ and $j\in\mathcal{J}$, for the strategy $(\psi,b)\in\Theta(m)$. 
Then, we have that $(\psi',b) \in \Theta(\epsilon)$, since
\begin{align*}
\sum_j \psi_j' + \sum_{i} b_i\phi_i 
    &= \sum_j \Big\{ \psi_j-\frac{m-\epsilon}{d} \Big\} + \sum_i b_i\phi_i\\
    &= \sum_j \psi_j - d\frac{m-\epsilon}{d} + \sum_i b_i \phi_i\\
    &= \sum_j \psi_j + \sum_{i} b_i\phi_i -m+\epsilon
     \geq m-m+\epsilon = \epsilon.
\end{align*}
The cost of the strategy $(\psi',b)$ equals
\begin{align*}
\sum_j \int_{\mathbb{R}} \psi_j' \d\nu_j + \sum_{i} b_i p_i
    &= \sum_j \int_{\mathbb{R}} \Big\{ \psi_j-\frac{m-\epsilon}{d} \Big\} \d\nu_j + \sum_ib_ip_i \\
    &= \sum_j \int_{\mathbb{R}} \psi_j \d\nu_j + \sum_{i} b_i p_i - \sum_j \int_{\mathbb{R}} \frac{m-\epsilon}{d} \d\nu_j \\ 
    &= \sum_j \int_{\mathbb{R}} \psi_j \d\nu_j + \sum_{i} b_i p_i - d \frac{m-\epsilon}{d} \\
    &\overset{\eqref{eq:phim}}{\leq} m -\epsilon- (m-\epsilon)
     = 0.
\end{align*}
Therefore, we have constructed a trading strategy that is a uniform strong arbitrage, which is a contradiction.
Consequently, $\Phi(m)=m$ for all $m \in \mathbb{R}$.
\end{proof}

Proposition \ref{dr} shows that the functional $\Phi(f)$ admits the following representation:
\begin{equation*}
    \Phi(f) = \max\bigg\{ \int f \d\mu - \Phi^*(\mu), \ \mu \in \mathscr{M} (\mathbb{R}^d) \bigg\}.
\end{equation*}
Next, we want to show that the functional $\Phi^*$ equals zero for all measures $\mu$ in the set $\mathcal{Q}$, which settles the superhedging duality, \textit{i.e.} Theorem \ref{thm:SPH}.

\begin{proposition}\label{lem:SHD}
The conjugate functional $\Phi^*(\mu)$ of $\Phi(f)$ equals
\begin{equation*}
\Phi^*(\mu) := \sup_{f \in C_b} \bigg \{ \int_{\mathbb{R}^d} f \d\mu-\Phi(f) \bigg \}
    = \begin{cases}
        0, & \text{if } \mu \in \mathcal{Q}\\
        +\infty, & \text{otherwise}.
    \end{cases}
\end{equation*}
\end{proposition}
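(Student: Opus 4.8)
The plan is to exploit the linearity of $\Phi$ established in Proposition \ref{dr} (Step II). Writing $L_\mu(f) := \int_{\mathbb{R}^d} f\,\mathrm{d}\mu - \Phi(f)$, linearity of both $f\mapsto\int f\,\mathrm{d}\mu$ and $\Phi$ makes $L_\mu$ a \emph{linear} functional on the vector space $C_b$. Since $\Phi^*(\mu)=\sup_{f\in C_b}L_\mu(f)$ is the supremum of a linear functional over a vector space, it can take only the values $0$ (when $L_\mu\equiv 0$) or $+\infty$ (as soon as $L_\mu(f_0)\neq 0$ for some $f_0$, because then $L_\mu(tf_0)=tL_\mu(f_0)\to+\infty$ for a suitable sign of $t\to\infty$, and $tf_0\in C_b$). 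Hence the statement reduces to showing that $L_\mu\equiv 0$ precisely when $\mu\in\mathcal{Q}$. I record at the outset that $\Phi(0)=0$, which is immediate from $\Phi(0)=\Phi(0)+\Phi(0)$ and will be used repeatedly.

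First I would treat the inclusion $\mu\in\mathcal{Q}$. For any $f\in C_b$ and any $(\psi,b)\in\Theta(f)$ one has $\sum_{j}\psi_j+\sum_i b_i\phi_i\ge f$ pointwise; integrating against $\mu$ and using that each $\psi_j$ depends only on the $j$-th coordinate, so that $\int_{\mathbb{R}^d}\psi_j\,\mathrm{d}\mu=\int_{\mathbb{R}}\psi_j\,\mathrm{d}\mu_j=\int_{\mathbb{R}}\psi_j\,\mathrm{d}\nu_j$ by $\mu_j=\nu_j$, together with $\int\phi_i\,\mathrm{d}\mu=p_i$, gives $\int f\,\mathrm{d}\mu\le \sum_j\int\psi_j\,\mathrm{d}\nu_j+\sum_i b_i p_i=\pi(\psi,b)$. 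Taking the infimum over $(\psi,b)\in\Theta(f)$ yields $\int f\,\mathrm{d}\mu\le\Phi(f)$, i.e. $L_\mu(f)\le 0$ for every $f$. Because $L_\mu$ is linear, $L_\mu(-f)=-L_\mu(f)\le 0$ forces $L_\mu(f)\ge 0$ as well, so $L_\mu\equiv 0$ and $\Phi^*(\mu)=0$.

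For the converse I would prove the contrapositive: if $\mu\notin\mathcal{Q}$ then $L_\mu(f_0)\neq 0$ for a suitable $f_0$, whence $\Phi^*(\mu)=+\infty$. Since the $\nu_j$ are probability measures, a measure $\mu\in\mathscr{M}(\mathbb{R}^d)$ fails to lie in $\mathcal{Q}$ exactly when either some marginal constraint $\mu_{j_0}=\nu_{j_0}$ or some price constraint $\int\phi_{i_0}\,\mathrm{d}\mu=p_{i_0}$ is violated (a discrepancy in total mass is automatically detected by a marginal). In the price case, take $f_0=\phi_{i_0}$: the strategies $(0,e_{i_0})$ and $(0,-e_{i_0})$ give $\Phi(\phi_{i_0})\le p_{i_0}$ and $\Phi(-\phi_{i_0})\le -p_{i_0}$, and since $\Phi(\phi_{i_0})+\Phi(-\phi_{i_0})=\Phi(0)=0$ by linearity, both inequalities must be equalities, so $\Phi(\phi_{i_0})=p_{i_0}$ and $L_\mu(\phi_{i_0})=\int\phi_{i_0}\,\mathrm{d}\mu-p_{i_0}\neq 0$. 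In the marginal case, the finite Borel measures $\mu_{j_0}$ and $\nu_{j_0}$ on $\mathbb{R}$ differ, so there is $g\in C_b(\mathbb{R})$ with $\int g\,\mathrm{d}\mu_{j_0}\neq\int g\,\mathrm{d}\nu_{j_0}$; setting $f_0(x)=g(x_{j_0})$ and running the same two-sided domination argument with the $j_0$-th leg equal to $\pm g$ yields $\Phi(f_0)=\int g\,\mathrm{d}\nu_{j_0}$ and thus $L_\mu(f_0)=\int g\,\mathrm{d}\mu_{j_0}-\int g\,\mathrm{d}\nu_{j_0}\neq 0$. In either case $\Phi^*(\mu)=+\infty$.

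The main obstacle is the converse direction: one must (i) argue that every $\mu\notin\mathcal{Q}$ violates a \emph{single} constraint that is detectable by a bounded continuous test function — relying on the fact that finite Borel measures on the Polish space $\mathbb{R}$ are separated by $C_b$ — and (ii) evaluate $\Phi$ exactly on that test payoff. Step (ii) is where linearity is indispensable: the obvious dominating strategies give only the one-sided bounds $\Phi\le(\text{cost})$, and it is the identity $\Phi(h)+\Phi(-h)=0$ that upgrades them to the exact values $\Phi(\phi_{i_0})=p_{i_0}$ and $\Phi(g\circ\mathrm{proj}_{j_0})=\int g\,\mathrm{d}\nu_{j_0}$ that are needed to certify $L_\mu\neq 0$.
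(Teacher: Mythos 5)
Your argument leans at every essential point on the full \emph{linearity} (additivity) of $\Phi$: the dichotomy ``$\sup_f L_\mu(f)\in\{0,+\infty\}$ because $L_\mu$ is linear'', the sign flip $L_\mu(-f_0)=-L_\mu(f_0)$, and the identity $\Phi(h)+\Phi(-h)=\Phi(0)=0$ that you use to upgrade the one-sided superhedging bounds $\Phi(\phi_{i_0})\le p_{i_0}$ and $\Phi(-\phi_{i_0})\le -p_{i_0}$ to exact equalities. This is a genuine gap, because $\Phi$ is not additive: a superhedging functional of this type is only \emph{sublinear} (positively homogeneous and subadditive). If $\Phi$ were linear, the dual representation would force $\max_{\mu\in\mathcal{Q}}\int f\d\mu=\Phi(f)=-\Phi(-f)=\min_{\mu\in\mathcal{Q}}\int f\d\mu$ for every $f$, i.e.\ all measures in $\mathcal{Q}$ would price every payoff identically --- contradicting the very premise that upper and lower model-free bounds differ. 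Concretely, for $d=2$ and $\mathcal{I}=\emptyset$ one has $\Phi(f)=\max\big\{\int f\d\mu : \mu \text{ a coupling of } \nu_1,\nu_2\big\}$ by Kantorovich duality, and $\Phi(f)+\Phi(-f)>0$ for any bounded $f$ whose integral is not constant over couplings. You do cite Step II of the proof of Proposition \ref{dr} for this, and the paper indeed asserts additivity there; but that assertion is itself erroneous (the inclusion $\Theta(f+g)\subset\Theta(f)+\Theta(g)$ fails: a strategy superhedging $f+g$ need not split into one superhedging $f$ plus one superhedging $g$ unless these are separable), and, tellingly, the paper's own proof of the present proposition never invokes additivity --- it only uses positive homogeneity $\Phi(\lambda f)=\lambda\Phi(f)$, $\lambda>0$.

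The repair goes exactly along the lines of the paper's proof, and most of your construction survives it. For $\mu\in\mathcal{Q}$, your domination argument correctly gives $L_\mu(f)\le 0$ for all $f$; you then need only $L_\mu(0)=-\Phi(0)\ge 0$ (the zero strategy gives $\Phi(0)\le 0$) to conclude $\Phi^*(\mu)=0$ --- the stronger claim $L_\mu\equiv 0$ is both unnecessary and false in general. For $\mu\notin\mathcal{Q}$, since only scalings by $\lambda>0$ are available, you must exhibit a test function with $L_\mu(f_0)>0$ \emph{strictly}, not merely $L_\mu(f_0)\neq 0$; this is achieved by adapting the sign of the test function to the sign of the violation: if $\int\phi_{i_0}\d\mu>p_{i_0}$ take $f_0=\phi_{i_0}$ and use $\Phi(\phi_{i_0})\le p_{i_0}$; if $\int\phi_{i_0}\d\mu<p_{i_0}$ take $f_0=-\phi_{i_0}$ and use $\Phi(-\phi_{i_0})\le-p_{i_0}$; in the marginal case choose $g$ or $-g$ so that $\int g\d\mu_{j_0}>\int g\d\nu_{j_0}$. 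This is precisely why the paper treats over- and under-pricing as two separate cases (iii) and (iv): one cannot move a minus sign through $\Phi$. With that fix your route is sound, and in the price case it is actually cleaner than the paper's (the direct test functions $\pm\phi_{i_0}$ make the Urysohn construction of cases (iii)--(iv) unnecessary); your observation that a total-mass discrepancy is automatically detected through the marginals is also a nice simplification of the paper's case (i).
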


\begin{proof}
We will first show that for all finite measures $\mu$ in $\mathbb{R}^d$ that do not belong to the set $\mathcal{Q}$, the conjugate functional $\Phi^*$ is infinite. 

\textit{Case (i).} 
Assume first the $\mu$ is not a probability measure, \textit{i.e.} let $\mu(\mathbb{R}^d)>1$. 
Then, since $\Phi(m)=m$, we have immediately
\begin{equation*}
\Phi^* (\mu)
    \geq \sup_{m \in \mathbb{R}} \bigg\{ \int_{\mathbb{R}^d} m \d\mu-\Phi(m) \bigg\}
    = \sup_{m \in \mathbb{R}} \bigg\{ m \int_{\mathbb{R}^d}  \d\mu-m \bigg\}
    = \sup_{m \in \mathbb{R}} \bigg\{ m \mu(\mathbb{R}^d)-m \bigg\}
    = +\infty.
\end{equation*}
     
\textit{Case (ii).} 
Assume next that (some of) the marginals are not matching, \textit{i.e.} consider a set $B \in \mathcal{B}(\mathbb{R})$ and some $j \in \mathcal{J}$ such that $\mu_j(B) \neq \nu_j(B)$.
Then, there exists a function $h:\mathbb{R} \rightarrow \mathbb{R}$ bounded and continuous such that 
\[
    \int_{\R} h \d\mu_j > \int_{\mathbb{R}}h \d\nu_j.
\]
Let us also define the target payoff function $f(x):=h(x_j)$, $x\in \mathbb{R}^d$, which is continuous and bounded, and the trading strategy $(\psi,b)$ with
\[
    \psi_j(x) := h(x) \text{ and } \psi_k(x) := 0 \text{ for all } x\in\R, j,k\in\mathcal{J}, k \neq j, \text{ and } b_i=0 \text{ for all } i \in \mathcal{I}.
\]
Then, by construction, we have that $\sum_j \psi_j(x)+ \sum_i b_i\phi_i = \psi_j(x) = h(x) \geq f(x)$, hence $(\psi,b) \in \Theta(f)$, while
\[
    \pi(\psi,b) = \int_{\mathbb{R}}\psi_j \d\nu_j=\int_{\mathbb{R}}h \d\nu_j.
\]
Henceforth, we get that $\int f \d\mu - \Phi(f) \geq \int h\d\mu_j - \int h\d\nu_j>0$ and, using that $\Phi(\lambda f)=\lambda \Phi(f)$ for $\lambda>0$, we arrive at
\[
    \Phi^* (\mu) 
        \geq \sup_{\lambda >0} \bigg\{ \int_{\mathbb{R}^d} \lambda f \d\mu- \Phi(\lambda f) \bigg\}
        \geq \sup_{\lambda >0} \bigg\{ \lambda \bigg( \int_{\mathbb{R}} h \d\mu_j - \int_{\R} h \d\nu_j \bigg) \bigg\}
        = +\infty.
\]

\textit{Case (iii).} 
Assume now that (some of) the traded multi-asset prices are not matching, \textit{i.e.} let $\int_{\mathbb{R}^d}\phi_i\d\mu \geq p_i+ \epsilon$, for some $i \in \mathcal{I}$ and for $\epsilon>0$. 
Let $\delta\in \big [\frac{\epsilon}{1+d},\epsilon \big]$ and define the sets
\[
    A^i:=[A_1^i,+\infty) \times \dots \times [A_d^i,+\infty) 
       \ \text{ and } \
    C^i:=(A_1^i-\delta,+\infty) \times\dots \times (A_d^i-\delta,+\infty), 
\]
for the $i\in\mathcal{I}$ mentioned above, where $A_j^i\in\R$, for all $j\in\mathcal{J}$.
We assume, without loss of generality, that the multi-asset payoff function $\phi_i:\mathbb{R}^d \rightarrow \mathbb{R}$ satisfies the following condition
\[
\phi_i(x) = \begin{cases}
                \phi_i(x), & x \in A^i\\
                0,  & x \notin A^i.
            \end{cases}
\]
Let us also define the auxiliary function
\[
    g_i(x)=\phi_i(x)+\delta \mathbbm{1}_{(A_1^i-\delta, +\infty)}(x)+\dots +\delta \mathbbm{1}_{(A_d^i-\delta, +\infty)}(x),
\]
and observe that the following hold true:
\begin{align}\label{eq:help-26-iii}
\nu_j\big( (A_j^i-\delta,A_j^i) \big) < \frac{\epsilon - \delta}{d \delta} 
    \quad \text{ and } \quad
\nu_j\big( (A_j^i-\delta,+\infty) \big) < \frac{p_i - \int \phi_i \d\mu}{d \delta},
\end{align}
for $\delta \in \big [\frac{\epsilon}{1+d},\epsilon \big]$, $p_i-\int \phi_id \mu>\epsilon-\delta$ and $j\in\mathcal{J}$.
The sets $A^i$ and $(C^i)^c$ are closed and disjoint hence, by Urysohn's Lemma, there exists a function $f:\mathbb{R}^d \rightarrow \mathbb{R}$ such that $\phi_i \leq f \leq g_i$. 
Consider the strategy $(\psi,b)$ with $\psi_j(x):=\delta \mathbbm{1}_{(A_j^i-\delta,+\infty)}(x)$, for all $j\in\mathcal{J}$, and $b_i=1$ and $b_k=0$ for the aforementioned $i\in\mathcal{I}$ and all $k \in\mathcal{I}\setminus\{i\}$.
Then $\sum_j \psi_j + \sum_{i}b_i\phi_i \ge f(x)$ by construction, hence $(\psi,b) \in \Theta(f)$. 
Moreover, we have that
\begin{align*}
\Phi(f) 
    &\leq \sum_{j\in\mathcal{J}} \int_{\mathbb{R}}\psi_j\d\nu_j + \sum_{i\in\mathcal{I}} b_ip_i\\
    & = \sum_{j\in\mathcal{J}} \int_{\mathbb{R}} \delta \mathbbm{1}_{(A_j^i-\delta,+\infty)}\d\nu_j + p_i\\
    & = \delta \sum_{j\in\mathcal{J}} \bigg\{ \int_{\mathbb{R}} \mathbbm{1}_{(A_j^i-\delta,A_j^i)}\d\nu_j  
      + \int_{\mathbb{R}}\mathbbm{1}_{(A_j^i,+\infty)}\d\nu_j \bigg\} + p_i\\
    & = \delta \sum_{j\in\mathcal{J}} \bigg\{  \nu_j\big((A_j^i-\delta,A_j^i)\big)+\nu_j\big((A_j^i,+\infty)\big) \bigg\} + p_i\\
    & = \delta \sum_{j\in\mathcal{J}} \nu_j\big((A_j^i-\delta,A_j^i)\big)
      + \delta 
      - \delta \sum_{j\in\mathcal{J}} \nu_j\big((-\infty,A_j^i)\big) + p_i\\
    &\overset{\eqref{eq:help-26-iii}}{<} \delta d \frac{\epsilon-\delta}{d\delta}+\delta - \delta d \nu_{j_0}\big((-\infty,A_{j_0}^i)\big)+p_i\\
    & = \epsilon -\delta d \nu_{j_0}\big((-\infty, A_{j_0}^i)\big)+p_i,
\end{align*}
where $\nu_{j_0}\big((-\infty, A_{j_0}^i)\big) = \max_{j\in\mathcal{J}}\nu_j \big((-\infty, A_j^i) \big)$.
Henceforth, $-\Phi(f) \geq -p_i-\epsilon+ \delta \d\nu_{j_0}\big((-\infty, A_{j_0}^i)\big)$, which, together with the initial assumption that $\int_{\mathbb{R}^d}\phi_i\d\mu \geq p_i+ \epsilon$, yields immediately
\begin{equation*}
\int_{\mathbb{R}^d} f\d\mu - \Phi(f) 
    \geq  p_i + \epsilon - p_i - \epsilon + \delta d \nu_{j_0}\big( (-\infty, A_{j_0}^i)\big)
    = \delta d \nu_{j_0}\big( (-\infty, A_{j_0}^i)\big)
    >0.
\end{equation*}
Consequently, we have again that  
\[
    \Phi^*(\mu) \geq \sup_{\lambda >0} \bigg\{ \int_{\mathbb{R}^d} \lambda f \d\mu- \Phi(\lambda f) \bigg\} = +\infty.
\]
     
\textit{Case (iv).}
Finally, let us assume again that (some of) the prices of multi-asset derivatives are not matching, but now $\int_{\mathbb{R}^d}\phi_i\d\mu<p_i$ for some $i\in\mathcal{I}$.
The sets $A^i$ and $(C^i)^c$ defined previously are closed and disjoint, hence there exists a function $f:\mathbb{R}^d \rightarrow \mathbb{R}$ such that $-g_i \leq f \leq -\phi_i$. 
Then
\begin{equation*}
    -\int_{\mathbb{R}^d} g_i\d\mu \leq \int_{\mathbb{R}^d} f \d\mu \leq -\int_{\mathbb{R}^d}\phi_i\d\mu < p_i.
\end{equation*}
Using that $f\leq -\phi_i$, the strategy $(0,\dots,0,\dots,b_i=-1,\dots,0) \in \Theta(f)$,
and we have that $\Phi(f)\leq -p_i$ which is equivalent to $-\Phi(f) \geq p_i$. 
Henceforth, we have that
\begin{align*} 
\int_{\mathbb{R}^d}f \d\mu -\Phi(f) 
    &\geq - \int_{\mathbb{R}^d} g_i \d\mu + p_i\\
    &= -\int_{\mathbb{R}^d} \phi_i \d\mu -\delta \sum_{j\in\mathcal{J}}\nu_j \big( (A_j^i -\delta,+\infty) \big) + p_i\\
    & \geq p_i - \int_{\mathbb{R}^d} \phi_i \d\mu - \delta \d \nu_{j_1}\big( (A_{j_1}^i-\delta, +\infty) \big) 
    \overset{\eqref{eq:help-26-iii}}{>}0,
\end{align*}
where $\nu_{j_1}\big( (A_{j_1}^i-\delta,+\infty)\big)=\max_{j\in\mathcal{J}}\nu_j \big((A_j^i-\delta,+\infty) \big)$.
Consequently, we have once more that  
\[
    \Phi^* (\mu) \geq \sup_{\lambda >0} \bigg\{ \int_{\mathbb{R}^d} \lambda f \d\mu- \Phi(\lambda f) \bigg\} = +\infty.
\]

Finally, it remains to show that once $\mu \in \mathcal{Q}$ then $\Phi^*(\mu)=0$. 
Let us start by observing that for all probability measures $\mu$ we have that $\Phi^*(\mu) \geq 0$. 
Thus, we will just show that if $\mu \in \mathcal{Q}$ then $\Phi^*(\mu) \leq 0$.
Let $\mu \in \mathcal{Q}$, \textit{i.e.} $\mu(\mathbb{R}^d)=1$, $\mu_j=\nu_j$ for all $j\in\mathcal{J}$ and $\int_{\mathbb{R}^d}\phi_i \d\mu=p_i$, for all $i \in \mathcal{I}$, then, for all $(\psi,b)\in \Theta(f)$ we get that
\begin{align*}
\pi(\psi,b)
    &\ge \int_{\mathbb{R}^d} \bigg\{ \sum_{j \in \mathcal{J}} \psi_j(x_j) + \sum_{i \in \mathcal{I}} b_i\phi_i(x) \bigg\} \mu(\dx) 
     \geq \int_{\mathbb{R^d}}f(x)\mu(\dx).
\end{align*}
Consequently, we get directly that $\Phi(f) \geq \int f \d\mu$, therefore $\Phi^*(\mu)\leq 0$.
\end{proof}
       
Now, we can also provide the proof of the Fundamental Theorem.

\begin{proof}[Proof of Theorem \ref{thm:FTAP}]
We know from Proposition \eqref{dr} that $\Phi$ is linear, continuous and increasing, and admits the following dual representation
\begin{equation*}
    \Phi(f) = \max\bigg\{ \int f \d\mu - \Phi^*(\mu), \mu \in \mathscr{M} (\mathbb{R}^d) \bigg \}.
\end{equation*}
In case the financial market is free of uniform strong arbitrage, Proposition \ref{lem:SHD} yields that the dual representation reduces to 
\begin{equation*}
    \Phi(f)=\max_{\mu \in \mathcal{Q}}\int_{\mathbb{R}^d}f\d\mu.
\end{equation*}
Since $\Phi:C_b \rightarrow \mathbb{R}$ is a real function then this maximum exists, \textit{i.e.} the set $\mathcal{Q}$ is not empty.

Conversely, let $\mathcal{Q} \neq \emptyset.$ 
We will show that there does not exist a strategy which is a uniform strong arbitrage.
Indeed, let $\epsilon>0$ and consider a trading strategy $(\psi,b)$ such that $(\psi,b) \in \Theta(\epsilon)$. 
However, since $\mathcal{Q} \neq \emptyset$ we have that 
\begin{align*}
\pi(\psi,b)
    &= \sum_j \int_{\mathbb{R}}\psi_j \d\nu_j + \sum_{i} b_ip_i
     \geq \int_{\mathbb{R}^d} \bigg\{ \sum_j \psi_j(x_j) + \sum_i b_i\phi_i(x) \bigg\} \mu(\dx)\\
    &\geq \int_{\mathbb{R}^d} \epsilon \mu(\dx)
     = \epsilon.
\end{align*}
Hence, there does not exist uniform strong arbitrage.
\end{proof}


\section{Characterization of optimal measures and strategies}
\label{sec:optimal}

Let us now provide an interesting characterization of the optimal measures $\mu^\star$ and the optimal trading strategies $(\psi^\star,b^\star)$, \textit{i.e.} those that attain the supremum and the infimum in the superhedging duality (Theorem \ref{thm:SPH}).
A result analogous to \citet[Proposition 2.8.]{bartl2017marginal} holds in our setting, since the optimality conditions are satisfied, \textit{i.e.} the payoff function $f$ is bounded, the set $\mathcal{I}$ is finite, the set $\mathcal{Q}$ is non-empty, there exists a coupling $\mu$ between the marginals $\mu_1, \dots, \mu_d$ and $\int_{\mathbb{R}^d} \phi_i \d \mu=p_i$; the proof is omitted for the sake of brevity. 
Hence, there exists a strategy $(\psi^\star,b^\star) \in \Theta(f)$ such that the minimum price of the portfolio is attained, in other words we have that $\Phi(f)=\pi(\psi^\star,b^\star)$.
The following result shows that we can also characterize the optimal measure $\mu^{\star} \in \mathcal{Q}$ in terms of the trading strategies. 

\begin{proposition}
Let $f: \mathbb{R}^d \rightarrow \mathbb{R}$ be bounded, assume that the set $\mathcal{Q} \neq \emptyset$, and that for every $b \neq 0$ there exists a coupling $\mu$ between $\mu_1,\dots,\mu_d$. 
Then, for a probability measure $\mu^{\star} \in \mathcal{Q}$ the following are equivalent:
\begin{enumerate}
\item There exists a strategy $(\psi,b)\in \Theta(f)$ such that
    \begin{equation*}
        \sum_{j \in \mathcal{J}} \psi_j(x_j) + \sum_{i \in \mathcal{I}} b_i\phi_i(x) = f(x) \quad \text{ for } \mu^{\star}-\text{almost all } x \in \mathbb{R}^d.
    \end{equation*}
\item The probability measure $\mu^{\star}$ is optimal, that is, $\Phi(f)=\int f\d\mu^{\star}$. 
\end{enumerate}
\end{proposition}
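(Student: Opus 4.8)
The plan is to prove the two implications by a complementary-slackness argument resting on the weak-duality inequality that underlies Theorem~\ref{thm:SPH}. The computation I would record first is the following: for any admissible strategy $(\psi,b)\in\Theta(f)$ and any measure $\mu\in\mathcal{Q}$, integrating the defining inequality $\sum_{j}\psi_j+\sum_{i}b_i\phi_i\ge f$ against $\mu$ gives
\begin{equation*}
\int_{\mathbb{R}^d} f\d\mu
   \le \int_{\mathbb{R}^d}\Big(\sum_{j\in\mathcal{J}}\psi_j(x_j)+\sum_{i\in\mathcal{I}}b_i\phi_i(x)\Big)\mu(\dx)
   = \sum_{j\in\mathcal{J}}\int_{\mathbb{R}}\psi_j\d\nu_j+\sum_{i\in\mathcal{I}}b_ip_i
   = \pi(\psi,b),
\end{equation*}
where the two equalities use exactly the membership $\mu\in\mathcal{Q}$, namely $\mu_j=\nu_j$ and $\int\phi_i\d\mu=p_i$. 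The decisive observation is that the single inequality here comes entirely from the pointwise domination, so it is an equality precisely when $\sum_j\psi_j+\sum_ib_i\phi_i=f$ holds $\mu$-almost everywhere.

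For the implication $(1)\Rightarrow(2)$ I would take the strategy $(\psi,b)\in\Theta(f)$ furnished by $(1)$ and integrate its almost-everywhere equality against $\mu^\star\in\mathcal{Q}$; the display above then collapses to $\int f\d\mu^\star=\pi(\psi,b)\ge\Phi(f)$, the last step because $\Phi(f)$ is the infimum of $\pi$ over $\Theta(f)$. On the other hand $\int f\d\mu^\star\le\max_{\mu\in\mathcal{Q}}\int f\d\mu=\Phi(f)$ by Theorem~\ref{thm:SPH}. These two bounds force $\int f\d\mu^\star=\Phi(f)$, which is the optimality claimed in $(2)$.

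For the converse $(2)\Rightarrow(1)$ I would first invoke the dual attainment discussed just before the statement --- the analogue of \citet[Proposition 2.8]{bartl2017marginal}, valid here because $f$ is bounded, $\mathcal{I}$ is finite, $\mathcal{Q}\neq\emptyset$, and the coupling hypothesis holds --- to obtain an optimal strategy $(\psi^\star,b^\star)\in\Theta(f)$ with $\pi(\psi^\star,b^\star)=\Phi(f)$. Chaining this with the optimality of $\mu^\star$ yields
\begin{equation*}
\int_{\mathbb{R}^d}\Big(\sum_{j\in\mathcal{J}}\psi_j^\star(x_j)+\sum_{i\in\mathcal{I}}b_i^\star\phi_i(x)-f(x)\Big)\mu^\star(\dx)
   =\pi(\psi^\star,b^\star)-\int_{\mathbb{R}^d}f\d\mu^\star
   =\Phi(f)-\Phi(f)=0.
\end{equation*}
Since the integrand is nonnegative by $(\psi^\star,b^\star)\in\Theta(f)$ and has zero $\mu^\star$-integral, it must vanish $\mu^\star$-almost everywhere, which is exactly $(1)$ realized by $(\psi^\star,b^\star)$.

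The routine part consists of the two integral manipulations; the only genuinely nontrivial ingredient is the dual attainment used in $(2)\Rightarrow(1)$. I expect this to be the main obstacle in a self-contained treatment, since it requires existence of an actual minimizer of $\pi$ over $\Theta(f)$ rather than merely the value $\Phi(f)$. Here it is supplied by the cited analogue of \citet{bartl2017marginal}, whose hypotheses --- boundedness of $f$, finiteness of $\mathcal{I}$, non-emptiness of $\mathcal{Q}$, and existence of a coupling for every $b\neq0$ --- are precisely the assumptions imposed in the proposition.
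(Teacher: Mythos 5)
Your proposal is correct and takes essentially the same approach as the paper: the direction $(2)\Rightarrow(1)$ invokes the dual attainment result discussed just before the proposition and applies complementary slackness under $\mu^\star$, while $(1)\Rightarrow(2)$ integrates the almost-everywhere equality against $\mu^\star$ and sandwiches $\int f\,\mathrm{d}\mu^\star$ between $\Phi(f)$ on both sides, which is exactly the paper's (terser) argument.
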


\begin{proof}
Assume that $\mu^{\star}\in\mathcal{Q}$ is optimal and let $(\psi,b) \in \Theta(f)$ be the optimal trading strategy, \textit{i.e.} $\Phi(f)=\pi(\psi, b)$. 
Then, by definition, $\sum_j \psi_j + \sum_{i} b_i\phi_i \geq f$.
In order to show the reverse inequality, notice that
\begin{equation*}
\int_{\mathbb{R}^d} f \d\mu^{\star}
    = \Phi(f)
    = \pi(\psi,b)
    = \int_{\mathbb{R}^d} \bigg\{ \sum_{j \in \mathcal{J}} \psi_j + \sum_{i \in \mathcal{I}} b_i\phi_i \bigg\} \mu^{\star}(\dx)
\end{equation*}
Hence, for $\mu^{\star}-$ almost all $x \in \mathbb{R}^d$, it should hold $f \geq \sum_j \psi_j + \sum_{i} b_i\phi_i$.

Conversely, let $(\psi,b) \in \Theta(f)$ such that  $\sum_j \psi_j(x_j) + \sum_{i} b_i\phi_i(x) = f(x)$ for $\mu^{\star}-$ almost every $x \in \mathbb{R}^d$. 
Using that the infimum is attained, we will show that $\pi(\psi,b)=\int f \d \mu^{\star}$. 
Indeed, we have that
\begin{equation*}
\int_{\mathbb{R}^d}f \d\mu^{\star}
    \leq \sum_{j \in \mathcal{J}} \int_{\mathbb{R}} \psi_j\d\nu_j + \sum_{i \in \mathcal{I}} b_ip_i
    = \int_{\mathbb{R}^d} \bigg\{ \sum_{j \in \mathcal{J}} \psi_j + \sum_{i \in \mathcal{I}} b_i\phi_i \bigg\} \mu^{\star}(\dx)
    = \int_{\mathbb{R}}f \d\mu^{\star}. \qedhere
\end{equation*}
\end{proof}


\section{Computations using penalization and deep learning}
\label{sec:numerical-scheme}

Let us now explain the numerical scheme we will use in order to compute model-free bounds for option prices in our setting.
Our method is based on the penalization and deep learning scheme developed by \citet{eckstein2019computation}.
In order to compute, for example, the upper model-free bound of a derivative with payoff $f$ in our setting, \textit{i.e.}
\[
    \sup_{\mu\in\mathcal{Q}} \int_{\R^d} f\ud \mu,
\]
which is an optimization problem over probability measures, that is usually difficult to discretize and solve numerically, we will turn instead to the `dual' problem, \textit{i.e.}
\begin{align*}
\Phi(f) &= \inf \Big\{ \pi(\psi,b): (\psi,b) \in \Theta(f) \Big\} \\
        &= \inf \bigg\{ \sum_{j\in\mathcal{J}} \int_{\mathbb{R}} \psi_j\d\nu_j + \sum_{i \in \mathcal{I}} b_ip_i : \sum_{j\in\mathcal{J}} \psi_j + \sum_{i \in \mathcal{I}} b_i\phi_i \geq f\bigg\}.
\end{align*}
This is a simpler problem to solve numerically, since it involves the optimization over continuous and bounded functions ($\psi_j$) and vectors ($b_i$).

The first step is to reduce this problem to a finite dimensional setting, by replacing the space $\mathcal{H}=\{h: C_b \rightarrow \mathbb{R}, \text{ continuous and bounded}\}$ where the functions $\psi_j, j\in\mathcal J$, take place, by a subset $\mathcal{H}^m$.  
The space $\mathcal{H}^m$ can be described by a set of neural networks with a fixed structure, but unspecified parameter values, where $m$ counts the number of neurons per layer.
The sequence $(\mathcal{H}^m)_{m \in \mathbb{N}}$ is increasing, $\mathcal{H}^{\infty}= \cup_{m \in \mathbb{N}} \mathcal{H}^m$, and $\mathcal{H}^\infty$ is dense in $\mathcal{H}$.
Then, the optimization problem takes the form
\begin{align*}
\Phi^m(f) 
    &= \inf \bigg\{ \sum_{j\in\mathcal{J}} \int_{\mathbb{R}} \psi_j\d\nu_j + \sum_{i \in \mathcal{I}} b_ip_i : \sum_{j\in\mathcal{J}} \psi_j + \sum_{i \in \mathcal{I}} b_i\phi_i \geq f, \psi_j\in\mathcal{H}^m, j\in\mathcal{J} \bigg\}.
\end{align*}

The second step, in order to allow a step-wise updating of the parameters, is to penalize the inequality constraint $\sum_{j} \psi_j + \sum_{i} b_i\phi_i \geq f$ using a non-decreasing penalty function $\beta: \mathbb{R} \rightarrow \mathbb{R}_+$ and considering a sequence of penalty functions $(\beta_{\gamma})_{\gamma >0}$ parameterized by a penalty factor $\gamma$.  
Let us introduce a reference probability measure $\theta$, then we get the penalized and parameterized optimization problem
\begin{align*}
\Phi^m_{\theta, \gamma}(f) 
    &= \inf \bigg\{ \sum_{j\in\mathcal{J}} \int_{\mathbb{R}} \psi_j\d\nu_j + \sum_{i \in \mathcal{I}} b_ip_i + \int_{\R} \beta_\gamma \bigg( f - \sum_{j\in\mathcal{J}} \psi_j - \sum_{i \in \mathcal{I}} b_i\phi_i \bigg) \d\theta, \psi_j\in\mathcal{H}^m, j\in\mathcal{J} \bigg\}.
\end{align*}
The optimization problem $\Phi^m_{\theta, \gamma}(f)$ for fixed $m$ and $\gamma$ is a tractable optimization problem involving a class of neural networks for approximating the continuous functions $\psi_j$ and, according to \citet{eckstein2019computation}, we have the following convergence result
\[
    \Phi^m_{\theta, \gamma}(f) \longrightarrow \Phi(f) \quad \text{ as } m, \gamma \rightarrow +\infty.
\]


\section{Numerical experiments and results}
\label{sec:results}

Let us now present the numerical experiments we have conducted using artificial data, and discuss the results.
We would like to evaluate how the presence of additional information allows to reduce the no-arbitrage gap, \textit{i.e.} the difference between the reference price and the model-free bound or the upper and lower model-free bounds, to understand whether some pieces of additional information are more useful than others, and finally to test how the method scales with the dimension of the problem (\textit{i.e.} the number of assets).

Throughout this section, we will train deep neural networks with the following characteristics in order to compute the model-free upper bound of multi-asset options:

\begin{table}[H]
\scalebox{1.0}{
\textbf{}\begin{tabular}{  c | c}
    Hyper-parameter     & Value \\ 
    \hline\hline
    Hidden layers       & 4  \\
    Neurons per layer   & 64 \\
    Activation function & ReLU\\       
    Initialization      & Xavier\\
    Optimizer           & Adam\\
    Batch size          & 128\\
    {Iterations } & 25,000\\
    \hdashline
    $\gamma$  & 80 \\
    $\theta$  & normal\\
    \hline	  
\end{tabular} }
\caption{Penalization and neural network hyper-parameters.}\label{tab:hyper}
\end{table}

\noindent In order to generate artificial data for our experiments, we will use the Black--Scholes model as the benchmark model, \textit{i.e.} we will assume that the marginals of the log-returns follow the normal distribution and are coupled with a Gaussian copula with correlation matrix $\rho$.
Then, we will generate the prices for the options that represent the additional information using this benchmark model and we will also compute the reference price using the same model. 
Subsequently, we will discard the benchmark model and compute the model-free bounds using only the option prices generated from this, exactly as postulated in our method.


\subsection{Experiment 1}
\label{subseq:ex-1}

Consider a financial market that contains three traded assets $S=(S^1, S^2, S^3)$ and assume we are interested in the price of a European call-on-max option on these assets, \textit{i.e.} the payoff function equals
\begin{equation} \label{eq:exp1}
    f(x) = \big( x_1 \vee x_2 \vee x_3 - K \big)^+.
\end{equation}
The interest rate is assumed zero for simplicity, the maturity $T=1.5$ (years), while the initial values, variances and correlation matrix are summarized below
\begin{align*}
S_0 = \begin{pmatrix}
    10 \\
    10 \\
    10
    \end{pmatrix}, 
\quad
\sigma = \begin{pmatrix}
    0.3 \\
    0.4 \\
    0.5
    \end{pmatrix}, 
\quad    
\rho = \begin{pmatrix}
    1 & 0.5 & 0.5 \\
    0.5 & 1 & 0.5 \\
    0.5 & 0.5 & 1 
    \end{pmatrix}.
\end{align*}

In order to understand the impact that additional information has on the model-free bounds, we will consider several cases where we gradually add more information, \textit{i.e.} more traded options with known prices.
More specifically, we will consider the following cases:
\begin{enumerate}[label=(E1.\arabic*),itemindent=1.em]\setcounter{enumi}{-1}
\item\label{E11} Base case: only the three marginal distributions are known.
\item Case 1: Base plus additional traded two-asset call-on-max options with payoff function 
 $$\phi_1 (x) = \big( x_1 \vee x_2 - K \big)^+, \quad K=6.$$ 
\item Case 2: Case 1 plus additional traded two-asset call-on-max options with payoff function
    $$\phi_2 (x) = \big( x_2 \vee x_3 - K \big)^+, \quad K=6.$$ 
\item Case 3: Case 2 plus additional traded two-asset call-on-max options with payoff function    
    $$\phi_3 (x) = \big( x_1 \vee x_3 - K \big)^+, \quad K=\{5,6,7\}.$$ 
\item\label{E15} Case 4: Case 3 plus additional traded three-asset call-on-max options with payoff function
     $$\phi_4 (x) = \big( x_1 \vee x_2 \vee x_3 - K \big)^+, \quad K=\{5,7\}.$$ 
\end{enumerate}


\subsubsection*{Convergence}

In this first experiment, we would like to perform a ``sanity check'' of the method, to ensure that the penalization and neural network approximation with the parameters selected converges to the correct value for the bounds.
Let us thus consider the numerical scheme with the hyper-parameters outlined in \cref{tab:hyper} and assume we are interested in computing the model-free upper bound for a call-on-max option with payoff \eqref{eq:exp1} and strike $K=6$.

\begin{figure}[H]
\centering
    \includegraphics[width=0.6\textwidth]{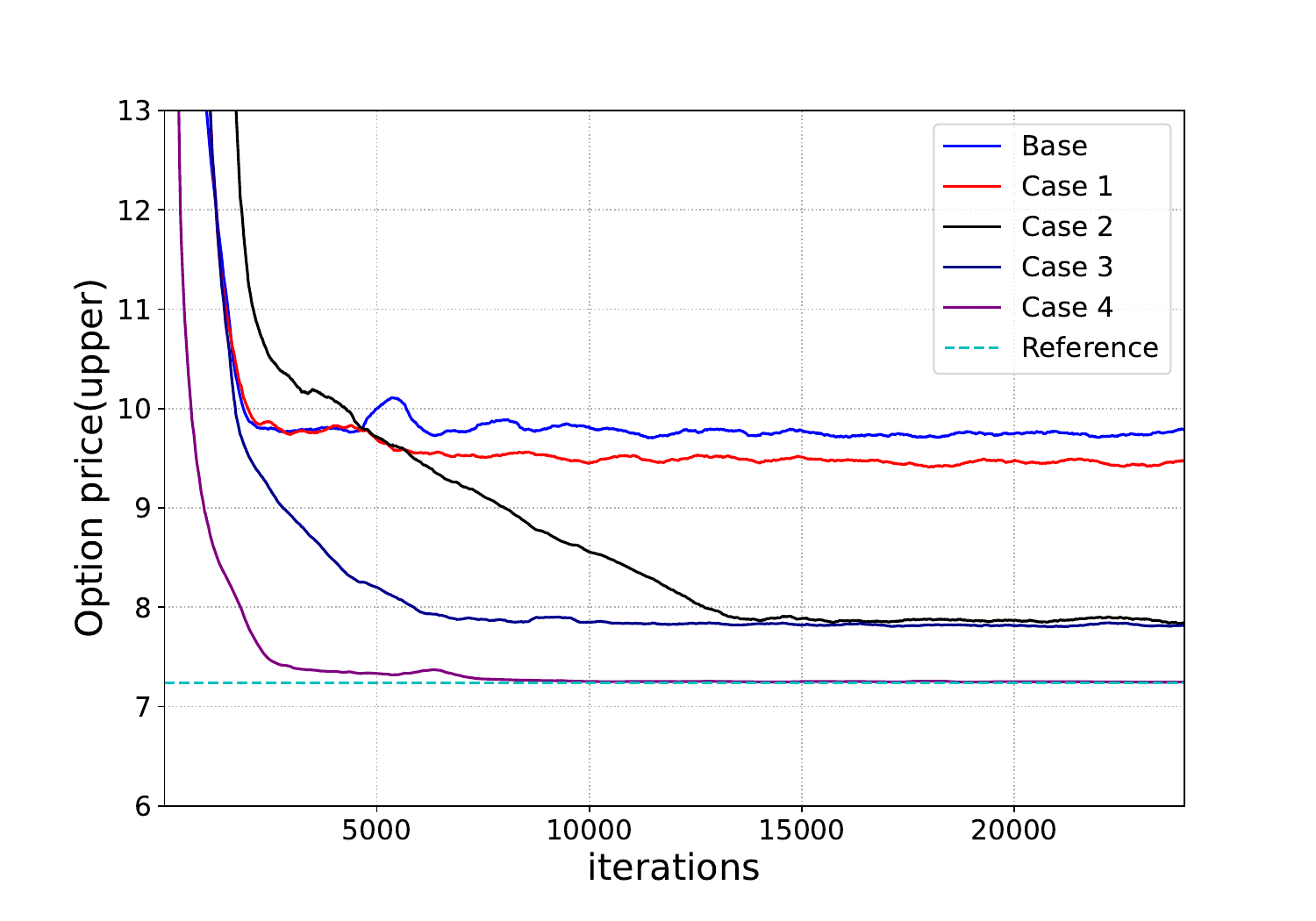}    
    \caption{Model-free upper bounds for the price of a three-asset call-on-max option with strike price $K=6$, in the different cases outlined above.}
    \label{fig:ex1_1}
\end{figure}

The outcome of this experiment is presented in Figure \ref{fig:ex1_1}.
The immediate observation is that the numerical scheme has converged to the solution after roughly 15 thousand iterations.
This observation has been also confirmed by additional experiments that have been conducted in this setting, with different values for the model parameters and the hyper-parameters. 
In the numerical experiments that follow, we will thus work with the setting outlined in \cref{tab:hyper}, and use 25 thousand iterations for training the neural network.


\subsubsection*{Analysis and discussion}

In the experiments that follow, we would like to explore the impact that additional information has on the quality of the model-free bounds.
We would like to observe, in particular, how the structure of the additional information impacts the distance between the model-free upper bound and the reference price.

The left panel on Figure \ref{fig:ex1_2a} presents the model-free bounds for a three-asset call-on-max option for various strikes, in the setting outlined in \ref{E11}--\ref{E15}.
The strike prices for the target payoff function range from 2 to 14, while the strike prices for the additional information are concentrated on values less than 8. 
We can observe thus two main patterns: (i) the additional information is improving the model-free bounds, especially for strike prices less than 8; (ii) the impact of the additional information is deteriorating for options with larger strikes prices, and for strikes above 13 approximately all cases collapse to the base case. 

\begin{figure}[H]
    \centering
    \includegraphics[width=0.495\textwidth]{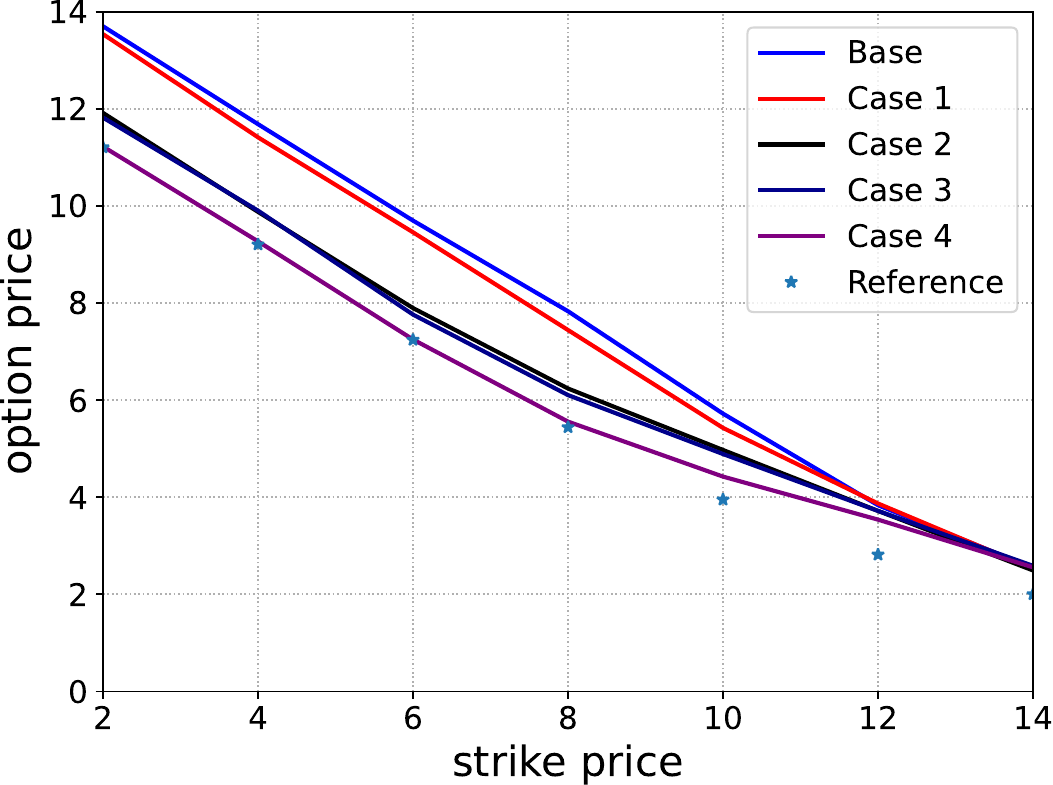}
    \includegraphics[width=0.495\textwidth]{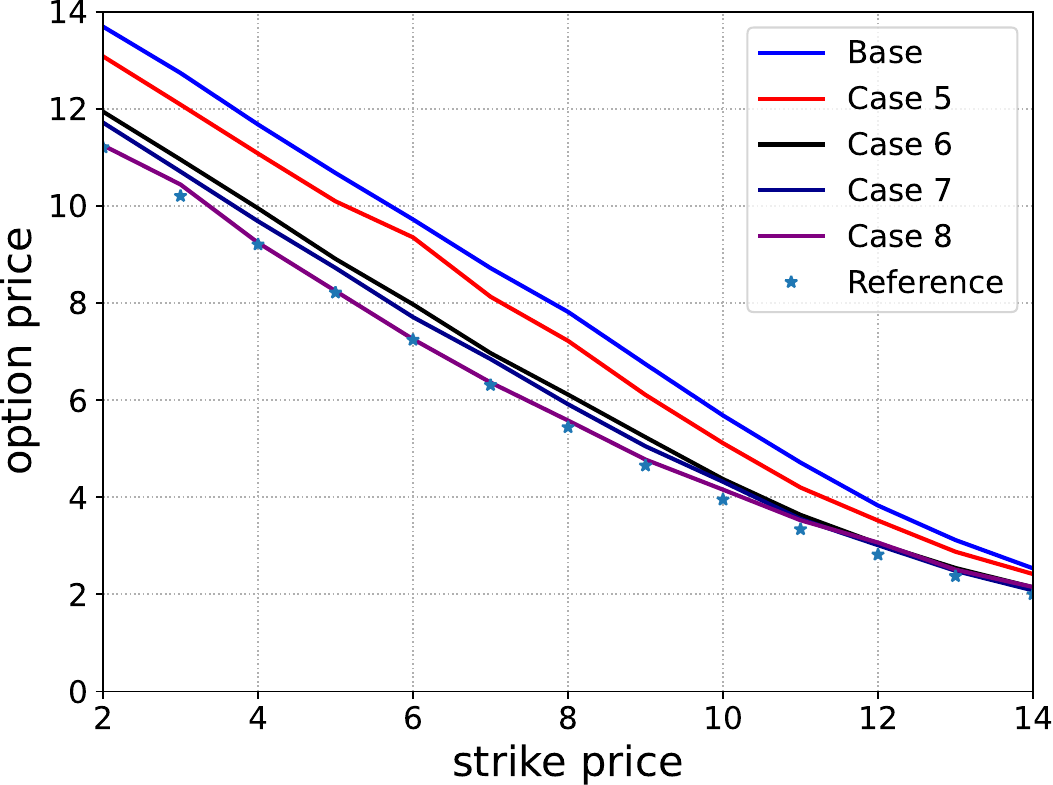}
    \caption{Model-free bounds for various strikes using the setting \ref{E11}--\ref{E15} on the left, and using the setting \ref{E16}--\ref{E19} on the right.}
    \label{fig:ex1_2a}
\end{figure}

The right panel on Figure \ref{fig:ex1_2a} presents the model-free bounds for a three asset call-on-max option for various strikes in the setting outlined in \ref{E16}--\ref{E19} below.
This setting supersedes \ref{E11}--\ref{E15} as follows: the base case remains the same and in every other case more options are added, namely 
\begin{enumerate}[label=(E1.\arabic*),itemindent=1.em]\setcounter{enumi}{4}
    \item\label{E16} Case 5: Base case plus 5 more options with payoff $\phi_1$ and strike prices $K=\{6, 9, 11, 13, 15\}$.
    \item Case 6: Case 5 plus 4 more options with payoff $\phi_2$ and strike prices $K=\{6, 11, 13, 15\}$.
    \item Case 7: Case 6 plus  5 more options with payoff $\phi_3$ and strike prices $K=\{5, 6, 7, 11, 13\}$. 
    \item\label{E19} Case 8: Case 7 plus 2 more options with payoff $\phi_4$ and strike prices $K=\{5, 7\}$.  
\end{enumerate} 

We can clearly observe now that the addition of more information, \textit{i.e.} more traded options with the same payoff but different strikes, is making the overall quality of the model-free bounds better.
This is particularly pronounced for Cases 6 and 8, which are now much closer to the reference value for large strikes (greater than 8) compared to Cases 2 and 4.
Another interesting observation is that the additional market information in Cases 3 and 7 does not provide a significant reduction of the model-free upper bound compared to the already existing information in Cases 2 and 6.
A possible explanation for this phenomenon could lie in the volatility of the marginal distributions; Assets 2 and 3 have significantly higher volatility than Asset 1, and probably the information provided in Cases 2 and 6 about their prices is more ``valuable''.
Finally, let us point that although the number of three-asset options in Cases 4 and 8 is exactly the same, the additional information on the lower-dimensional options is leading to a significant improvement of this bound, which is now almost touching the reference value for all strikes.

\subsection{Experiment 2}

Let us now consider a financial market that contains six traded assets $S=(S^1,\dots,S^6)$ and assume we are interested in the price of a European basket call option on these assets, \textit{i.e.} the payoff equals
\begin{align}
    f(x) = \bigg( \frac16 \sum_{i=1}^6 x_i - K \bigg)^+.    
\end{align}
The interest rate is assumed zero for simplicity, the maturity $T=1.5$ (years), while the initial values, variances and correlation matrix are summarized below
\begin{align*}
S_0 = \begin{pmatrix}
    10 \\
    10 \\
    10 \\
    10 \\
    10 \\
    10
    \end{pmatrix}, 
\quad
\sigma = \begin{pmatrix}
    0.3 \\
    0.4 \\
    0.5 \\
    0.35 \\ 
    0.45 \\
    0.55
    \end{pmatrix}, 
\quad    
\rho = \begin{pmatrix}
    1   & 0.45&  0.35&  0.44&  0.5&  0.30 \\ 
    0.45&  1&  0.38&  0.36&  0.41&  0.43\\ 
    0.35&  0.38&  1&  0.44&  0.32&  0.42 \\
    0.44&  0.36&  0.44&  1&  0.46&  0.29\\
    0.5 &  0.41&  0.32&  0.46&  1&  0.6\\
    0.30&  0.43&  0.42&  0.29&  0.6 & 1\\  
\end{pmatrix}
\end{align*}





\subsubsection*{Impact of additional information}

In order to understand the impact that additional information has on the distance between the model-free upper bound and the reference price, we consider the following setting, where information is added gradually, \textit{i.e.}
\begin{enumerate}[label=(E2.\arabic*),itemindent=1.em]\setcounter{enumi}{-1}
    \item\label{E20} Base case: only the six marginal distributions are known.
    \item Case 1: Base plus call-on-min options with payoff $(x_1 \wedge \dots \wedge x_6-K)^+$ for 8 strike prices
        $$K=\{6.5, 7.5, 8.5, 9.5, 10.5, 11.5, 12.5, 13.5\}.$$
    \item Case 2: Case 1 plus call-on-max options with payoff $(x_1 \vee \dots \vee x_6-K)^+$ for 8 strike prices
        $$K=\{6.5, 7.5, 8.5, 9.5, 10.5, 11.5, 12.5, 13.5\}.$$
    \item\label{E23} Case 3: Case 2 plus basket options with payoff 
        \[  
            \Big( \frac15 \sum_i x_i - K \Big)^+ \text{ for } i \in \{1,\dots,5\}, \ i \in \{2,\dots,6\}, \ i \in \{1,2,3,5,6\},
        \]
        each with 8 strike prices $K=\{6.6, 7.6, 8.6, 9.6, 10.6, 11.6, 12.6, 13.6\}.$   
\end{enumerate}
The outcome of this experiment is presented in Figure \ref{fig:ex2_1}.
We can observe that the addition of more information in the form of traded asset prices is making the model-free bounds sharper, and gradually directs them closer to the reference price.
We can also observe though that the addition of basket option prices in Case 3 is having a significant impact, in particular for large strikes, greater than ten.

\begin{figure}[H]
    \centering
    \includegraphics[width=0.6\textwidth]{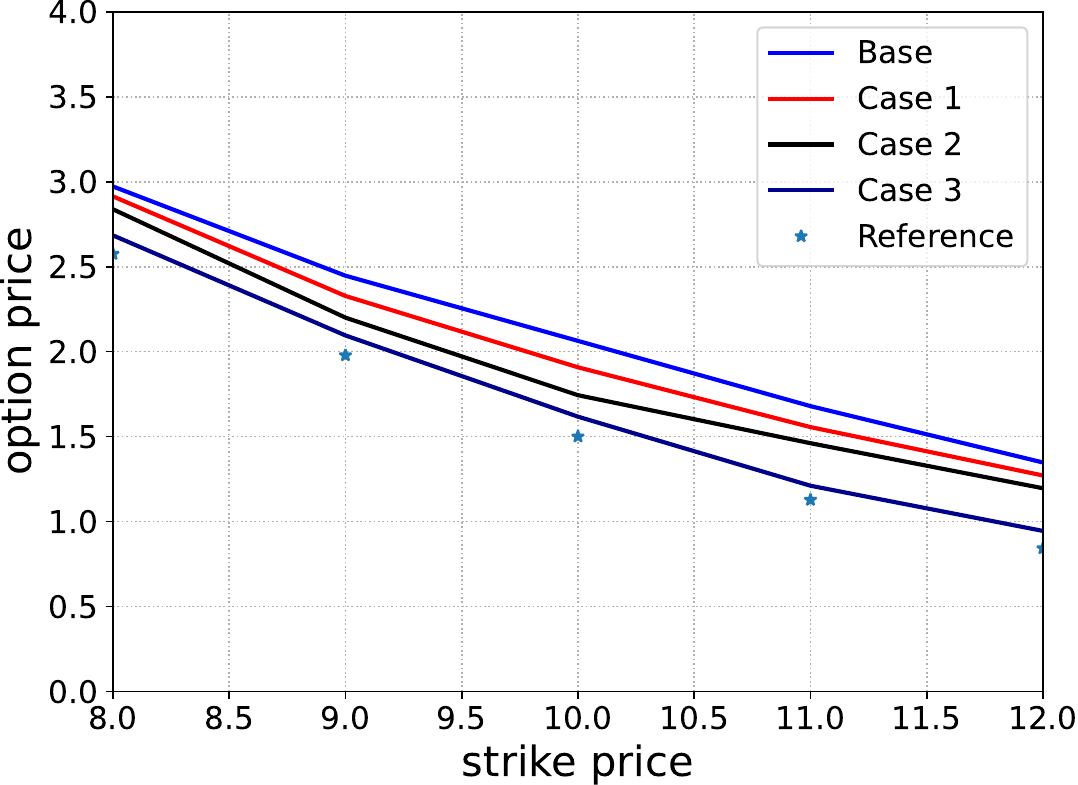}
    \caption{Model-free bounds for various strikes using the setting \ref{E20}--\ref{E23}.} 
    \label{fig:ex2_1}
\end{figure}


\subsubsection*{Impact of relevant information}

In order to investigate further the last observation, that the addition of `relevant' information has a higher impact than other information, whereby with `relevant' we refer to options that have the same payoff structure as the target payoff function $f$, we have conducted the following experiment.
Here, information is added gradually in the first four steps, while in the last step \textit{only} the relevant information is taken into account together with the marginals.
In other words, the setting is the following:
\begin{enumerate}[label=(E2.\arabic*),itemindent=1.em]\setcounter{enumi}{-1}
    \item Base case: only the six marginal distributions are known.\setcounter{enumi}{3}
    \item\label{E24} Case 4: Base plus three put-on-min options with payoff $(K - x_i  \wedge x_j)^+$ for $\{i,j\} \in\big\{ \{1,2\}, \{3,4\}, \{5,6\} \big\}$, each  with  8 strike prices
        $$K = \{ 6.75, 7.75, 8.75, 9.75, 10.75, 11.75, 12.75, 13.75\}.$$
    \item Case 5: Case 4 plus call-on-min options with payoff $(x_1 \wedge \dots \wedge x_6 - K)^+$ for 8 strike prices
        $$K = \{ 6.5, 7.5, 8.5, 9.5, 10.5, 11.5, 12.5, 13.5 \}.$$
    \item Case 6: Case 5 plus call-on-max options with payoff $(x_1 \vee \dots \vee x_6-K)^+$ for 8 strike prices
        $$K = \{ 6.5, 7.5, 8.5, 9.5, 10.5, 11.5, 12.5, 13.5 \}.$$
    \item Case 7: Case 6 plus basket options with payoff 
        \[  
            \Big( \frac15 \sum_i x_i - K \Big)^+ \text{ for } i \in \{1,\dots,5\}, \ i \in \{2,\dots,6\}, \ i \in \{1,2,3,5,6\},
        \]
        each with 8 strike prices $K=\{6.6, 7.6, 8.6, 9.6, 10.6, 11.6, 12.6, 13.6\}.$
    \item\label{E28} Case 8: Base plus basket options with payoff 
        \[  
            \Big( \frac15 \sum_i x_i - K \Big)^+ \text{ for } i \in \{1,\dots,5\}, \ i \in \{2,\dots,6\}, \ i \in \{1,2,3,5,6\},
        \]
        each with 8 strike prices $K=\{6.6, 7.6, 8.6, 9.6, 10.6, 11.6, 12.6, 13.6\}.$
\end{enumerate}

The outcome of this experiment is presented in Figure \ref{fig:ex2_2}.
We can observe once again that the addition of more information is improving the quality of the model-free bounds, in the sense that they are getting closer to the reference price.
The most interesting observation however is the comparison between Case 7 and 8: as can be see from \cref{fig:ex2_2}, the line corresponding to these two cases are almost touching, although Case 7 contains significantly more information than Case 8.
This leads us to the conclusion that relevant information, in the sense of option prices with the same payoff function, provide significant improvement of the model-free bounds.
Therefore, given the perennial trade-off between accuracy and computational efficiency in mathematical finance, relevant information should be prioritized in modeling and applications, even at the expense of other information, since the loss in accuracy appears small.

\begin{figure}[H]
    \centering
    \includegraphics[width=0.6\textwidth]{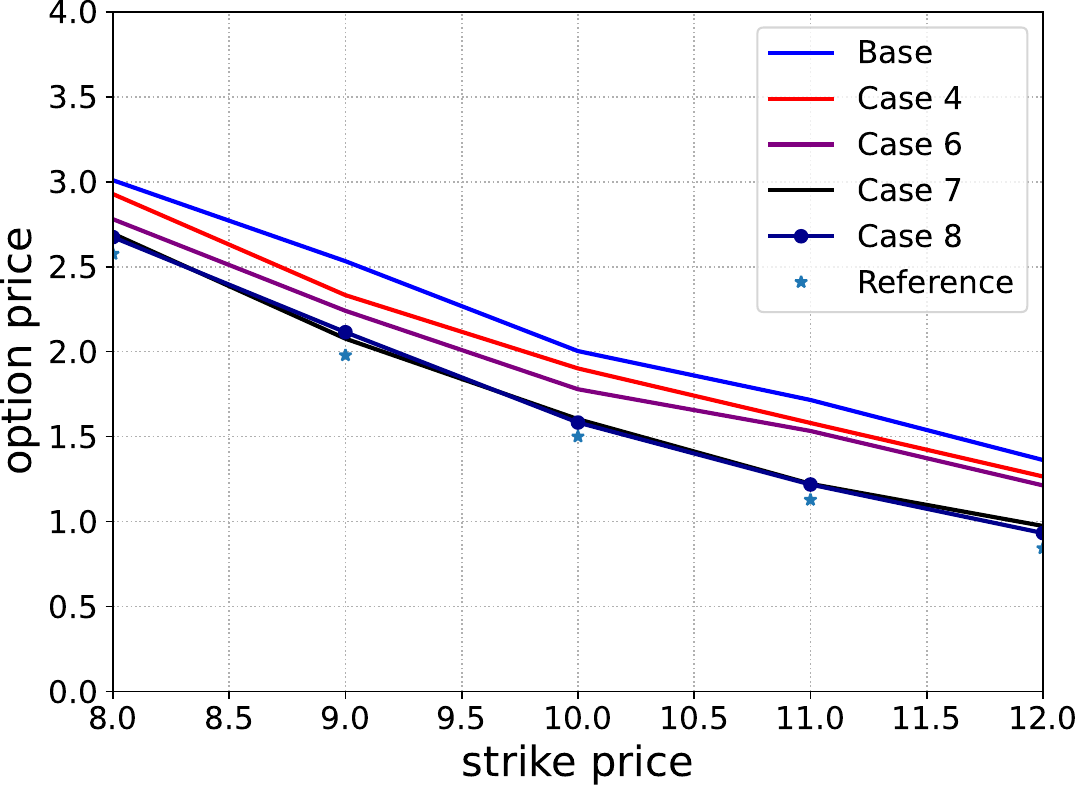}
    \caption{Model-free bounds for various strikes using the setting \ref{E20}, \ref{E24}--\ref{E28}.} 
    \label{fig:ex2_2}
\end{figure}


%


\subsection{Computational time}

Finally, we would like to study the computational times for the numerical approximation of the model-free bounds and, most importantly, understand how the computational times scale with the dimension of the problem, \textit{i.e.} the number of traded primary assets.
The results of this experiment are summarized in \cref{table:comptime}, and concern experiments conducted on a Windows-system laptop with an AMD Ryzen$^\text{TM}$ 7 5800HS processor (3201MhZ).

The experiments show that the method is quite fast and efficient, since computations for 18 assets can be conducted in approximately 18 minutes. 
More importantly, we can observe that the computational time grows linearly with the number of traded assets, which offers the hope to tackle problems in even higher dimensions with this methodology. 

\begin{table}[H] 
\begin{tabular}{cc} 
\# Assets & Time (sec)  \\
\hline\hline
6               & 498   \\
15              & 921   \\
18              & 1107  \\
\hline
\end{tabular} 
\caption{Dimension vs computational time.}
\label{table:comptime}
\end{table}


\bibliographystyle{abbrvnat} 
\bibliography{References} 


\end{document}